\newcommand{\theauthors}{Santiago Badia,Jerrad Hampton and Javier Principe}
\newcommand{\thetitle}{A Massively Parallel Implementation of Multilevel Monte Carlo for Finite Element Models}
\definecolor{myellow}{RGB}{255,230,128}
\definecolor{gray20}{RGB}{204,204,204}
\definecolor{mygray}{RGB}{204,204,204}
\definecolor{mygreen}{RGB}{138,203,95}
\definecolor{lgreen}{RGB}{34,102,0}
\definecolor{myblue}{RGB}{77,151,214}
\newtheorem{theorem}{Theorem}[section]
\newtheorem{remark}[theorem]{Remark}
\tikzstyle{fig-ph}=[draw,minimum width=\textwidth, minimum height=\textwidth,text width=0.9\textwidth,color=red]
\tikzstyle{fig-ph-r}=[draw,minimum width=\textwidth, minimum height=1.4\textwidth,text width=0.9\textwidth,color=red]
\tikzstyle{fig-ph-rl}=[draw,minimum width=\textwidth, minimum height=0.5\textwidth,text width=0.9\textwidth,color=red]
\acrodef{dof}[DOF]{Degree Of Freedom}
\acrodef{mpi}[MPI]{Message Passing Interface}
\acrodef{uq}[UQ]{Uncertainty Quantification}
\acrodef{pde}[PDE]{Partial Differential Equation}
\acrodef{agfem}[AgFEM]{Aggregated Finite Element Method}
\acrodef{agfe}[AgFE]{Aggregated Finite Element}
\acrodef{fe}[FE]{Finite Element}
\acrodef{mc}[MC]{Monte Carlo}
\acrodef{qmc}[QMC]{Quasi Monte Carlo}
\acrodef{mlmc}[MLMC]{Multilevel Monte Carlo}
\acrodef{amlmc}[AMLMC]{Adaptive Multilevel Monte Carlo}
\acrodef{hpc}[HPC]{High Performance Computing}
\acrodef{qoi}[QoI]{Quantity of Interest}
\acrodef{amg}[AMG]{Algebraic Multigrid}
\acrodef{prng}[PRNG]{Pseudorandom Number Generator}
\acrodef{acr}[ACR]{Active Computation Ratio}
\acrodef{ncr}[NCR]{Non-Computation Ratio}
\acrodef{petsc}[PETSc]{Portable, Extensible Toolkit for Scientific Computation}
\acrodef{dd}[DD]{Domain Decomposition}
\acrodef{cg}[CG]{Conjugate Gradient}
\acrodef{spmd}[SPMD]{single-program multiple-data}
\acrodef{KLE}[KLE]{Karhunen-Loève Expansion}
\def\rev#1{{ #1}}
\def\rerev#1{{ #1}}
\def\rererev#1{{ #1}}
\DeclareTextFontCommand{\vbt}{\ttfamily\hyphenchar\font=45\relax}
\def\FEMPAR{{\vbt{FEMPAR}} }
\def\grad{{\boldsymbol{\nabla}}}
\def\diff{{\kappa}}
\def\x{{\boldsymbol{x}}}
\def\y{{\boldsymbol{y}}}
\def\w{{\omega}}
\def\D{{\mathcal{D}}}
\def\M{{\mathcal{M}}}
\def\lev{{\ell}}
\def\artdom{\mathcal{B}}
\font\doble=msbm10 scaled\magstep1
\newcommand\E{\hbox{\doble E}}
\newcommand\V{\hbox{\doble V}}
\def\T{\mathcal{T}}
\def\Th#1{Theorem~\ref{th:#1}}
\def\Rm#1{Remark~\ref{rm:#1}}
\def\Eq#1{~(\ref{eq:#1})}
\def\Fig#1{Fig.~\ref{fig:#1}}
\def\Tab#1{Table~\ref{tab:#1}}
\def\Alg#1{Alg.~\ref{alg:#1}}
\def\Line#1{line~\ref{line:#1}}
\def\Sec#1{Section~\ref{sec:#1}}
\begin{document}

\title{A Massively Parallel Implementation of Multilevel Monte Carlo for Finite Element Models}
\author[S. Badia]{Santiago Badia$^{1,2}$}
\author[J. Hampton]{Jerrad Hampton$^{2}$}
\author[J. Principe]{Javier Principe$^{3,2,*}$}

\thanks{\null\\
$^1$ School of Mathematics, Monash University, Clayton, Victoria, 3800, Australia.\\
$^2$ Centre Internacional de M\`etodes Num\`erics en Enginyeria, Esteve Terrades 5, E-08860 Castelldefels, Spain.\\
$^3$ Universitat Polit\`ecnica de Catalunya, Campus Diagonal Bes\`os, Av. Eduard Maristany 16, Edifici A (EEBE), 08019, Barcelona, Spain\\
$^*$ Corresponding author.\\
E-mails: {\tt santiago.badia@monash.edu} (SB), {\tt jhampton@cimne.upc.edu} (JH), {\tt principe@cimne.upc.edu} (JP)}

\thanks{This research was supported by the European Union's Horizon 2020 research and innovation programme under the ExaQUte project, with grant agreement No 800898, the project RTI2018-096898-B-I00 from the ``FEDER/Ministerio de Ciencia e Innovaci\'{o}n - Agencia Estatal de Investigaci\'{o}n'' and the Australian Government through the Australian Research Council (project number DP210103092). The authors also acknowledge the Severo Ochoa Centre of Excellence (2019-2023), which financially supported this work
under the grant CEX2018-000797-S funded by MCIN/AEI/10.13039/501100011033. The authors thankfully acknowledge the computer resources at MareNostrum and the technical support provided by Barcelona Supercomputing Center (IM-2019-3-0012). JH has received funding from the European Union's Horizon 2020 research and innovation programme under the Marie Sk\l{}odowska-Curie grant agreement No 712949 (TECNIOspring PLUS) and from the Agency for Business Competitiveness of the Government of Catalonia.}
\date{January 30, 2023}
\begin{abstract}
  The \ac{mlmc} method has proven to be an effective variance-reduction statistical method for \ac{uq} in \ac{pde} models, combining model computations at different levels to create an accurate estimate. Still, the computational complexity of the resulting method is extremely high, particularly for 3D models, which requires advanced algorithms for the efficient exploitation of \ac{hpc}.
  In this article we present a new implementation of the \ac{mlmc} in massively parallel computer architectures, exploiting parallelism within and between each level of the hierarchy. The numerical approximation of the \ac{pde} is performed using the finite element method but the algorithm is quite general and could be applied to other discretization methods.
  The two key ingredients of the implementation are a good processor partition scheme together with a good scheduling algorithm to assign work to different processors. We introduce a multiple partition of the set of processors that permits the simultaneous execution of different levels and we develop a dynamic scheduling algorithm to exploit it.
  The problem of finding the optimal scheduling of distributed tasks in a parallel computer is an NP-complete problem. We propose and analyze a new greedy scheduling algorithm to assign samples and we show that it is a 2-approximation, which is the best that may be expected under general assumptions. On top of this result we design a distributed memory implementation using the \ac{mpi} standard. Finally we present a set of numerical experiments illustrating its scalability properties.
\end{abstract}

  \keywords{Multilevel Monte Carlo; Uncertainty Quantification; Geometric Uncertainty; Stochastic Partial Differential Equations; Computational Statistics; Parallel Programming} 

\maketitle

\section{Introduction}
\label{sec:intro}

\ac{uq} requires the solution of stochastic \acp{pde} with random data. Some methods for solving stochastic \acp{pde}, e.g. stochastic Galerkin \cite{lemaitre2010spectral,ghanem1991stochastic}, are based on a standard approximation in space like \acp{fe} or finite volumes, and different types of polynomial expansions in the stochastic direction \cite{Xiu2003}. Although generally powerful, these techniques typically suffer two significant drawbacks. The first one is being intrusive, i.e. a code that can be used to solve a deterministic problem must be modified to solve the stochastic analogue. The second one is the poor scaling in the dimension of the stochastic space, i.e. suffering from the so-called ``curse of dimensionality'' \cite{Babuska2010}.

In contrast, sampling methods, i.e. \ac{mc} and its variants, have a convergence rate which is independent of the stochastic dimension and are non-intrusive. The only necessary assumption for such methods to converge to the exact statistics of the solution when the number of input samples tends to infinity is the existence of finite second moments, i.e. finite variance. This assumption is easily satisfied for many physical systems, making the method generally applicable to many problems of practical interest. They require the repeated evaluation of a deterministic model with different randomly generated inputs and can be implemented in a way that does not require intrusive modifications of the deterministic code. However, the number of samples required to achieve statistical convergence combined with the complexity of the computational model required to have enough spatial and/or temporal accuracy can make the \ac{mc} method rather expensive. Indeed, the reduction of overall computational cost to achieve a desired error tolerance is the principal motivation for the development of its variants, like the \ac{mlmc} method considered herein.

Stochastic collocation methods \cite{Babuska2010} are also based on polynomial expansions but using orthogonal basis and appropriate quadratures. The final computation requires independent evaluations of the deterministic model, in the same way as sampling methods, thus sharing the non-intrusiveness property. Appropriate quadratures are required to mitigate the curse of dimensionality, which is a field of intense research \cite{Xiu2005,Chen2017}. Similarly, importance sampling methods \cite{Rauhut2012, Hampton2015, Adcock2017}, and the deterministic design of experiments \cite{Diaz2018b} are also known to reduce the necessary number of model evaluations, particularly in the contexts of function approximation \cite{Luthen2021}. The algorithms discussed herein can be extended for parallelization of these ``smart sampling'' methods as well, but the required modifications are beyond the scope of this article.

Focus on this important complexity reduction makes \ac{mlmc} \cite{Kebaier2005, Giles2008, Giles2012, Mishra2012, Mishra2012a, Pisaroni2017}, and its adaptive variants \cite{Giles2009,Cliffe2011,Elfverson,Collier2015} arguably the most practical extensions of \ac{mc}. These methods rely on different levels of computational effort for the deterministic model, e.g. a hierarchy of spatial or temporal meshes. These methods seek to combine samples on each level to benefit from the high accuracy of the expensive, higher levels, and the low computational cost of the less accurate, lower levels. A key aspect of the algorithm is the evaluation of how many samples should be used at each level.

Sampling methods for \ac{uq}, including \ac{mlmc}, belong to a general class of methods for outer-loop applications \cite[Section 1.2]{Peherstorfer2018}, defined as computational applications in which there is an outer loop around a model that is called at each iteration to evaluate a function. This includes optimization and statistical inference apart from \ac{uq}. For example, in the case of optimization under uncertainty, a \ac{uq} problem is solved as an inner loop within each step of the outer loop of the optimization algorithm \cite{de2020touu}. A big effort was performed during the 90s for the development of the Dakota library \cite{Eldred1998,Eldred2000} aimed to deal with this class of problems. Parallelization opportunities where classified into two areas, algorithmic and function evaluation\footnote{The classification in \cite{Eldred1998,Eldred2000} actually included four categories but two are enough for the current discussion about \ac{mlmc}.} and it was already observed that the former requires very little or no inter-processor communication in contrast to the latter. Therefore, the first question posed was how to select the amount of parallelism used on each area, i.e. one has to decide between the assignment of processors to the parallelization of the model or the parallelization of the outer loop (with the concurrent execution of several models). The analysis in  \cite{Eldred1998,Eldred2000} shows that the most efficient choice is to use the minimum number of processors that permit to run the model, for the simple reason that some loss of efficiency occurs in strong scaling. 

The parallelization of \ac{mlmc} has been considered in recent years \cite{Sukys2014,Sukys2014a,Gmeiner2016,Shegunov2020,Zakharov2020,Gantner2016,Baumgarten2021} as multiple levels of spatial/temporal discretization introduce additional scheduling challenges to effectively utilize resources of a parallel environment with minimal idling.  One can conceptually exploit three levels of parallelism in \ac{mlmc}. 
The first is the parallelization of the (deterministic) \ac{pde} solver itself, which is often dependent on the level, meaning that different levels will use different computational resources. The second is the parallelization of the different \ac{mc} samples on each level of \ac{mlmc}, which is the easier one given the independence between them and the large number required, especially at lower discretization levels. The third one is the parallelization between levels, which we found important for the optimality of the algorithm, as discussed below. The two main ingredients of the implementation are i) a processor partitioning scheme and ii) a scheduling strategy and the goal is to maximize efficiency.

Scheduling strategies for \ac{mlmc} sampling can be classified \cite{Gmeiner2016} in three different ways. First, they are classified according to the number of parallelization layers exploited simultaneously, inside samples, between samples and between levels. Second, they are classified as homogeneous, when all the processors are working to sample on the same discretization level with the same number of processors per sample, and heterogeneous otherwise. It is important to note that, according to this definition, heterogeneous  scheduling includes many different possibilities, e.g. using different number of processors for running samples at the same level (which is considered in \cite{Gmeiner2016}) or running samples at different levels simultaneously (which is {\it not} considered in \cite{Gmeiner2016}). A final classification is between static scheduling, when the assignment of work is done before starting to sample, and dynamic scheduling, where assignment is made ``on the fly''. It is also important to note that the term ``dynamic'' is used in \cite{Gmeiner2016} to refer to scheduling strategies where the work is assigned depending on a cost estimation made at the beginning of the calculation but {\it before} starting to sample. Several scheduling algorithms are studied in \cite{Gmeiner2016}, mostly homogeneous, but also a couple of heterogeneous variants (with and without strong scaling), both static and dynamic, although three layer parallelism was not investigated.

A scheduling with three layer parallelism was proposed in \cite{Sukys2012,Sukys2014,Sukys2014a} with two variants, a static one in which the distribution is made based on work estimates given by the size of the discretization at each level and one in which the distribution of work depends also on the \ac{pde} coefficients (which are sample dependent) and is performed at execution time but before starting to sample. In these references, ``adaptive load balancing'' is used to refer to this second algorithm and it is explicitly stated that it is not ``dynamic''  see \cite[page 111]{Sukys2014a}, although it is categorized in this way in \cite{Gmeiner2016}. The processor partitioning scheme in \cite{Sukys2012,Sukys2014,Sukys2014a} divides available units into levels, each of them composed by samplers that contain several cores each, that is, the computational power is distributed into {\it many} levels and samples at a given time and assignment is made depending on the load balancing strategy just mentioned. For this scheme a 2-approximation is proved in \cite{Sukys2014,Sukys2014a} assuming strong scaling of the \ac{pde} solver, that is, the execution time is at most twice the optimal. The observed efficiency is improved by the adaptive load balancing algorithm although actual computational times are slower than in the static case \cite[Figs. 3 and 4]{Sukys2014}, which shows a very good strong and weak scaling up to 40k cores. A similar strategy was followed in \cite{Gantner2016}, where an object oriented implementation of several sampling methods, including \ac{mc} and \ac{qmc}, either single or multilevel is presented. 

Another dynamic scheduling strategy for \ac{mlmc} was recently proposed in \cite{Tosi2021}. It exploits general purpose scheduling algorithms in \cite{Tejedor2017} which permit to deal with complex dependencies between tasks and this is exploited to extract parallelism between samples and levels. Although in principle it can be also used with parallel sampling, the mapping of tasks to processors and its effects on load balancing is not discussed. Moreover, parallelization at the sample level is not shown in the numerical examples.

In this article we propose \emph{the first dynamic scheduling strategy with three layer parallelism} that works on top of a multiple partition of the set of available processors. Given the number of levels of the \ac{mlmc} algorithm and assuming a weak scaling of the \ac{pde} solver we define the number of processors per sample in terms of the number of processors used in the coarsest one, as in \cite{Gmeiner2016}. We then consider multiple partitions of the set of available processors by these numbers of processors, thus obtaining $L$ different partitions, where $L$ is the maximum number of levels of the \ac{mlmc} method. Together with this multiple partition we present a dynamic scheduling algorithm that assigns tasks to available resources prioritizing finer levels but allowing the concurrent execution of samples at different levels. This distinctive feature of our implementation admits a proof that the algorithm is a 2-approximation, with a very small idling actually occurring only at the end of the computation or at the end of an adaptive \ac{mlmc} iteration. Besides, excellent scalability is obtained, both strong and weak. With some additional modifications discussed below, this scalability is observed even for the difficult case of models having low per-sample evaluation times. Therefore, the novelty of this approach includes
\begin{itemize}
\item a dynamic scheduling algorithm that exploits three layer parallelism: for each individual sample, across samples, and across levels;
\item the description of an \ac{mpi} implementation based on a master-slave strategy with a multiple partition of slave processors;
\item a simple yet robust, dynamic batch sampling mechanism to reduce the communication between the master and the slaves;
\item a parallelization of the master for eliminating a bottleneck at extreme scales;
\item the numerical demonstration of the scalability of the implementation, also when stressed by short sampling times.
\end{itemize}

The article is structured as follows. \Sec{mlmc} describes the \ac{mlmc} method and an adaptive variant implemented herein and includes the description of the \ac{pde} we consider as an example. \Sec{scheduling} describes the scheduling in an abstract way and its 2-approximation property. \Sec{implementation} describes the actual implementation,  including the parallel partition strategy. \Sec{examples} presents several examples to demonstrate the weak and strong scalability of the implementation efficiency.

\section{The multilevel Monte Carlo method}
\label{sec:mlmc}

The principal idea of \ac{mlmc} is to exploit a hierarchy of model discretizations to reduce the overall computational cost by transferring the majority of the sampling cost to the cheaper models, and having an accuracy governed by the more expensive models. The method is appropriate for any hierarchy of models for which convergence of the \ac{qoi} is known, although such knowledge is not generally necessary.

\subsection{Model problem}
\label{sec:problem}
In this work we consider the following elliptic stochastic problem, although the algorithm permits consideration of a general class of \ac{pde} problems using parallel solvers. Given an oriented manifold $\M(\w)\subset \mathbb{R}^d$ and its corresponding interior domain $\D(\w)$, find $u(\x,\w)$ such that
\begin{align}
  -\grad \cdot \left( \diff \grad u \right) = f  \ \hbox{ in } \ \D(\w), \qquad
  u = u_0 \ \hbox{ on } \, \M(\w), \label{eq:model}
\end{align}
where $\w \in \Omega $, denotes the uncertainty, described by a complete probability space. Additionally, the stochastic coefficients, i.e. the diffusion $\diff = \diff(\x,\w)$, forcing term $f=f(\x,\w)$, and boundary condition $u_0 = u_0(\x,\w)$ may be considered random fields too.  We note that it is usual in the literature to assume these to be deterministic when stochastic domains are considered \cite{Chaudhry20181127, Mohan2011874, Xiu2006, Harbrecht2008, Harbrecht2013, Harbrecht2016, Dambrine2017943, Dambrine2016921}. Let us assume that the realizations of $\M(\w)$ and $\D(\w)$ are bounded almost surely. We can define a bounded artificial domain $\artdom$ that contains all possible realizations of $\M(\w)$, i.e. $\D(\w) \subset \artdom, \, \forall \w \in \Omega$. We also assume that $\diff$ and $f$ are defined in $\artdom$, independently of $\w \in \Omega$. With the random solution of this problem at hand we aim to compute $\E\left(Q(u)\right)$ where $Q$ is a \ac{qoi}, e.g. an integral of $u$ on a sub-region or surface, or an evaluation of $u$ at a specific point $\x$, and $\E$ the expectation.

The problem defined by \Eq{model} is well-posed under the assumption of uniform ellipticity  \cite{Barth2011,Babuska2010,Chaudhry20181127}, i.e. there exists $\diff_0$ such that $\diff(\x) \ge \diff_0, \, \forall \x \in \artdom$ (and $\forall \w \in \Omega$ if a stochastic diffusion $\diff(\x,\w)$ is considered). Under these assumptions, the bilinear form associated to the weak form of \Eq{model} is bounded and coercive and the Lax-Milgram lemma guarantees a solution for any $\w \in \Omega$ uniformly bounded by $ \|f\|_{L^2( \artdom)}$ and $\| u_0 \|_{H^{1/2}(\M(\w))}$ \cite{Chaudhry20181127}.

\def\artdom{\mathcal{B}}
The discretization of \Eq{model} is constructed on top of a grid $\T$, introducing a $\mathcal{C}^0$ Lagrangian \ac{fe} space. When the domain is stochastic we use the \ac{agfem} method in which the grid $\T$
is a shape regular partition of a background domain $\artdom$ which is typically a bounding box. A discrete approximation is then constructed identifying cut cells of the background mesh and building a sub-triangulation on each of them. This construction permits the integration of the weak form of the problem and a judicious choice of the degrees of freedom can obtain a well posed-problem. The \ac{agfem} method was introduced in \cite{Badia2017a}, implemented in parallel in \cite{Verdugo2019} and exploited to perform \ac{uq} in random domains in \cite{Badia2021}. The reader is referred to these publications for further details.

\subsection{The standard \ac{mlmc} method}
\label{sec:smlmc}

The \ac{mlmc} method distributes sampling on a hierarchy of discretizations to reduce the overall computational cost, with respect to that of sampling on the finest one, while keeping a similar accuracy. The hierarchy consists of $L+1$ meshes $\T_0,\T_1,...,\T_L$ of sizes $h_0>h_1>...>h_L$. 
\rerev{In particular, we consider $h_l=h_0s^{-l}$ (i.e. each mesh in the hierarchy is obtained by uniformly dividing each cell into $s^d$ subcells) with $s=2$.}
Performing $\{N_\lev\}_{\lev=0}^L$ simulations for different values of the random parameters $\w$ on $\{\T_\lev\}_{\lev=0}^L$, the expectation is corrected using the whole hierarchy as
\begin{equation}
\E(Q_L) = \E(Q_0) + \sum_{\lev=1}^L \E(Q_\lev-Q_{\lev-1}) \approx \sum_{\lev=0}^L \overline{Y_\lev}:=\widetilde{Q}_L, \label{eq:mlmc_average}
\end{equation}
where $Q_\lev=Q(u_{\lev})$ is the approximation of the \ac{qoi} computed using the \ac{fe} solution $u_\lev$ for the discretization $\T_\lev$ and $Y_\lev=Q_\lev-Q_{\lev-1}$ for $\lev=1,...,L$ and $Y_0=Q_0$. The total error of this approximation is now given by
\begin{equation}
  e^2( \widetilde{Q}_L ) = \E\left[( \widetilde{Q}_L - \E(Q) )^2\right] =  (\E(Q_L -Q))^2 + \V(\widetilde{Q}_L).
\label{eq:mlmc_error}
\end{equation}
The first term, the (squared) discretization error, is reduced by enforcing that $\T_L$ provides a sufficiently accurate approximation. Assuming that it decays as $ch^{\alpha}$ for some constant $c$ and rate $\alpha$ the maximum level required to make it smaller than, e.g. $\varepsilon^2/2$, is
\begin{equation}
  L = \lceil {\frac{1}{\alpha} \log_s \left(\sqrt{2}c \varepsilon^{-1}\right)} \rceil.
 \label{eq:mlmc_num_levels}
\end{equation}
\rererev{Theoretical estimates of $\alpha$ in terms of the regularity of the \ac{pde} and the properties of the discretization are available in some cases, e.g. for the linear finite element approximation of elliptic stochastic \acp{pde} $\alpha=2$, but $c$ and $\alpha$ are generally unknown and can only be estimated
a poseriori \cite{Elfverson}.}

The second term in \Eq{mlmc_error} is the statistical error, given by $\V(\widetilde{Q}_L) = \sum_{\lev=0}^{L} N_\lev^{-1} \V(Y_\lev)$. Under general conditions $\V(Y_\lev)$ tends to zero as $\lev\rightarrow\infty$ and in this sense, the \ac{mlmc} method can be understood as a variance reduction method.

The total computational cost is given by  $C_{\rm MLMC} = \sum_{\lev=0}^{L} N_\lev C_\lev$ where $C_\lev$ is the complexity (computational cost) of evaluating $Y_\lev$, i.e. the average cost to evaluate one sample at level $\lev$ and another one at level $\lev-1$. Its minimization allows the optimal number of samples to be taken on each level \cite{Giles2008} to have a mean squared error smaller than $\varepsilon^2/2$, as
\begin{equation}
  N_\lev =   \left\lceil 2 \varepsilon^{-2} \sqrt{\frac{\V(Y_\lev)}{C_\lev}}  \sum_{i=0}^{L} \sqrt{\V(Y_i)C_i} \right\rceil.
\label{eq:mlmc_num_samples}
\end{equation}
Observe that the number of samples per level in \Eq{mlmc_num_samples} depends on $\V(\widetilde{Q}_\lev)$, and $C_\lev$ and for some problems a theoretical estimate in terms of $h_\lev$ is possible. These estimates, however, contain unknown constants which need to be determined at an initial screening phase. An alternative is to update them during the execution, as described in the next section.

\subsection{An adaptive \ac{mlmc} method}
\label{sec:amlmc}

A practical extension of \ac{mlmc} is given by \ac{amlmc} \cite{Giles2009, Cliffe2011, Elfverson} \rev{or a sophisticated variant named Continuation Multilevel Monte Carlo \cite{Collier2015}}, which dynamically enforce the discretization and/or sampling error to be below a given tolerance. This is done by updating the maximum level $L$ and the sample sizes $\{N_0,\cdots,N_L\}$ with (ideally) optimal values that reduce the \ac{mlmc} error with minimal increase in computational cost. Since the actual expectation, variance, and cost are unknown, they are estimated from sample averages and variances, which we refer to as moment estimates. Extrapolation of these moments estimates are also needed when they are not available.

The algorithm used here begins with a simple initial set of samples, which may be chosen by any number of considerations. It is presented in \Sec{implementation}, consisting of a loop in which a set of samples are evaluated, convergence is checked and, if it is not reached, the maximum level $L$ and the sample sizes $\{N_0,\cdots,N_L\}$ are updated. In order to update $L$ an estimation of the constants $c$ and $\alpha$ in \Eq{mlmc_num_levels} is made using $\overline{Y_L}$ to estimate the discretization error \cite{Elfverson}. In order to update $N_\lev$ an estimation of variance and cost is required to evaluate \Eq{mlmc_num_samples}. Variance is estimated by the sample variance, i.e.
\begin{equation}
   V(Y_\lev) = \frac{1}{N_\lev} \sum_{i=1}^{N_\lev} (Y^i_\lev -  \overline{Y_\lev})^2.
  \label{eq:sample_variance}
\end{equation}
Cost estimates are computed for each level, by adding the total cpu-time used to compute each $Y^i_\lev$, denoted here as $C^i_\lev$ and computing the average cost $\overline{C}_\lev$ on that level. From \rev{these variance and cost estimates the number of samples needed to achieve the desired error are updated from \Eq{mlmc_num_samples} possibly with an additional fraction to ensure achieving the desired error without to avoid unnecessary outer loop iterations which reduce the efficiency of the algorithm.} 
It is important to observe that each iteration of the adaptive algorithm, in which the number of samples and the maximum number of levels are updated, requires a reduction (to compute averages and sample variances). In practice, it results in a synchronization, as described in \Sec{implementation} \rev{which makes the \ac{amlmc} more challenging to scale to large core counts than the standard \ac{mlmc}}.

\section{Parallel scheduling}
\label{sec:scheduling}

In this section we present a new scheduling algorithm for the scheduling of sampling tasks required by the \ac{mlmc} method in a parallel environment. \rev{As mentioned at the end of the previous section, the \ac{amlmc} algorithm iteratively updates the number of samples and levels, executing the \ac{mlmc} algorithm at each iteration, see \Rm{indep}.}

For a general introduction to the subject see \cite{Blazewicz2019,Vazirani2003}. In general (and imprecise) terms, the scheduling problem is the development of an algorithm for the assignment of tasks to processors to optimize some cost function, e.g. the total makespan (the maximum run-time). An important distinction to be made is between preemptive and non-preemptive scheduling. In the first case, the scheduling algorithm is developed under the assumption that tasks can be preempted and restarted later on at no extra cost. We do not consider this possibility in our implementation and we therefore concentrate only on the second case.

Some results are available under the assumption that each task is executed in one processor, as in \cite{Tosi2021}. Given a set of independent tasks $T_1,...,T_n$, whose execution times are $w_1,...,w_n$ respectively, and a set of $p$ processors, the makespan $W$ of a scheduling is bounded by the average work of each processor $\overline{W}=W_{\rm T}/p$ where
$$
W_{\rm T}=\sum_{i=1}^n w_i,
$$
and the maximum work of a task is given by
$$
W_{\rm LB}=\max \left\{ \overline{W},\max_i\{w_i\} \right\}.
$$
The bound $W\ge W_{\rm LB}$ is tight, the equality with the first argument of the maximum occurs when $w_i$ are all equal and with the second when $p<n$. This problem is NP-hard \cite{Blazewicz2019}.

A simple greedy algorithm to solve this problem is to assign a task to the processor that has the least amount of work already assigned, regardless of the execution time of each task. In this case, it is easy to show that $W<2W_{\rm opt}$, where $W_{\rm opt}$ is the makespan of the optimal scheduling \cite{Vazirani2003}, and it is said to be a 2-approximation. Moreover, the bound is tight, which is shown by considering the case of $p^2$ tasks of equal processing time followed by a single task whose processing time is $p$ times larger. This example suggests that executing the longer tasks first reduces the makespan. This fact was exploited in \cite{Graham1969} to develop a greedy algorithm in which tasks are ordered in decreasing execution time and then assigned to the first available processor, see algorithm 5.1.2 in \cite{Blazewicz2019}. The makespan bound for this algorithm is $W\le (\frac{4}{3}-\frac{1}{3p}) W_{\rm opt}$, that is, a 4/3-approximation. \rev{The extension of this result to case of parallel execution of tasks is far from obvious.}

Besides, it is important to emphasize that the execution time of a given sample in the \ac{mlmc} method is hardly known in advance.  In some cases the execution time can be estimated before the calculation based on the \ac{pde} coefficients. For example, in \cite{Sukys2014}, first order hyperbolic problems are approximated using explicit time integration methods with a time step determined by the (random) equation coefficients through the CFL stability condition. However, in general, tasks cannot be ordered by execution time beforehand and therefore a 2-approximation bound is the best that can be generally expected. It is worth emphasizing that the actual makespan is much closer to the optimal in practice, see \Sec{examples}.

Let us now move to the \rev{parallel} case in which a task may require more than one processor. There are several factors to consider in the decision of how many processors are assigned to each task. In a general case \cite{Gmeiner2016}, it depends on two factors, the idling that can be generated and the loss of efficiency in the strong scaling of the underlying solver. If no idling occurs, as in the algorithm we propose herein, and there are enough samples to assign work to all processors, the inevitable loss of efficiency in strong scaling of the solver implies that the best efficiency is always obtained assigning the minimum number of processors that permit to execute each sample, as already noticed in \cite{Eldred1998,Eldred2000}. On the other hand, executing all tasks with the same number of processors would lead to larger execution times on finer levels of the \ac{mlmc} hierarchy. Assuming a weakly scalable solver, and assigning a number of tasks proportional to the size of the mesh at each level gives constant mean execution time for each sample, with variations with the change of the random \ac{pde} coefficients. This assumption is made in \cite{Gmeiner2016}, where the number of processors per task at level $\lev$ of the \ac{mlmc} algorithm is taken of the form $2^{3\lev} p_0$ and $p_0$ is determined to maximize efficiency.

We divide the tasks in charge of each sample in the \ac{mlmc} algorithm into sets $T^q = \{T^q_1,...,T^q_{n_q}\}$ that require $q$ processors with $q \le p$ for fixed $p$ and we denote by $w^q_1,...,w^q_{n_q}$ the execution (wall-clock) time of each of them. The problem of scheduling these tasks using $p$ processors to minimize the makespan is also NP-complete. Therefore, we cannot expect to obtain a polynomial time algorithm, so we propose an algorithm to solve a restricted problem rather than the general one. We restrict the number of possible values of $q$ to $q_0,q_1,\ldots,q_M$ where $q_0|q_1|...|q_M$ (hereafter $a|b$ means that $a$ is an integer divisor of $b$ with both $a,b\in\mathbb{N}$, as usual). Observe that the choice in \cite{Gmeiner2016} satisfies this assumption. We also assume that $q_M|p$. This assumption of divisibility is not necessary in the scheduling algorithm, but permits to prove \Th{2ap} presented below, see \Rm{indep}. The total work is now
\begin{equation}
  \label{eq:total_work_parallel}
  W_{\rm T}=\sum_{q=q_0,q_1,\ldots,q_M} q \sum_{j=1}^{n_q} w^q_j.
\end{equation}
and
$$
W_{\rm LB}=\max \left\{ \overline{W},\max_{q,j}\{w^q_j\} \right\}.
$$
where, again $\overline{W}=W_{\rm T}/p$.

Under these restrictions (of the possible values of $q_0,q_1,\ldots,q_M$) we propose \Alg{incremental_greedy}, whose execution is illustrated in \Fig{greedy-execution} and we prove that it is a 2-approximation in \Th{2ap}.  Observe that because wall-clock times of samples are unknown before their actual execution, implementing \Alg{incremental_greedy} requires dynamic programming \cite{Blazewicz2019}. A \ac{mpi} implementation is described in the next section.

\begin{algorithm}
  \caption{ Incremental greedy scheduling \label{alg:incremental_greedy}}
  \For{$q \in \{q_M,q_{M-1},\ldots,q_1,q_0\}$}{
    \For{$j=1,\ldots,n_q$}{
        Assign $T^q_j$ to a set of $q$ processors among those with less workload. \\
    }
  }
\end{algorithm}

\tikzstyle{every picture}+=[remember picture] 
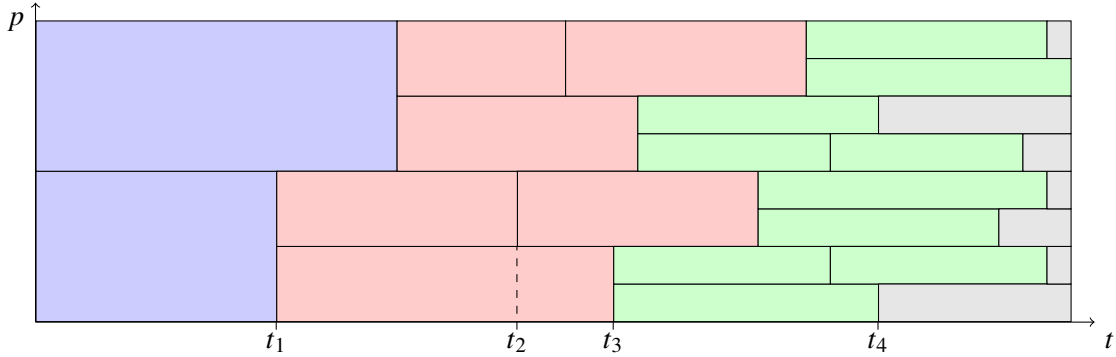
\begin{figure}
\begin{center}
\begin{tikzpicture}[xscale=1.6,yscale=0.5]
\pgfmathsetmacro{\u}{1.6cm}
\pgfmathsetmacro{\v}{0.5cm}
\pgfmathsetmacro{\Lx}{8.8}
\pgfmathsetmacro{\Ly}{8.5}

\draw[->] [black] (0,0) -- (\Lx,0);
\draw[->] [black] (0,0) -- (0,\Ly);
\node[anchor=north west] at (\Lx,0) {$t$};
\node[anchor=north east] at (0,\Ly) {$p$};

\node[anchor=south west,draw,minimum width=2*\u,minimum height=4*\v,fill=blue!20] at (0,0) {};
\node[anchor=south west,draw,minimum width=3*\u,minimum height=4*\v,fill=blue!20] at (0,4) {};

\draw[-] [black] (2,-0.2) -- (2,0);
\node[anchor=north] at (2,0) {$t_1$};

\node[anchor=south west,draw,minimum width=2.8*\u,minimum height=2*\v,fill=red!20] at (2,0) {};
\node[anchor=south west,draw,minimum width=2*\u  ,minimum height=2*\v,fill=red!20] at (2,2) {};
\node[anchor=south west,draw,minimum width=2*\u  ,minimum height=2*\v,fill=red!20] at (3,4) {};
\node[anchor=south west,draw,minimum width=1.4*\u,minimum height=2*\v,fill=red!20] at (3,6) {};

\node[anchor=south west,draw,minimum width=2*\u  ,minimum height=2*\v,fill=red!20] at (4,2) {};
\node[anchor=south west,draw,minimum width=2*\u  ,minimum height=2*\v,fill=red!20] at (4.4,6) {};

\draw[dashed] [black] (4,-0.2) -- (4,2);
\node[anchor=north] at (4,0) {$t_2$};
\draw[dashed] [black] (4.8,-0.2) -- (4.8,0);
\node[anchor=north] at (4.8,0) {$t_3$};

\node[anchor=south west,draw,minimum width=2.2*\u,minimum height=\v,fill=green!20] at (4.8,0) {};
\node[anchor=south west,draw,minimum width=1.8*\u,minimum height=\v,fill=green!20] at (4.8,1) {};
\node[anchor=south west,draw,minimum width=2*\u,minimum height=\v,fill=green!20] at (6,2) {};
\node[anchor=south west,draw,minimum width=2.4*\u,minimum height=\v,fill=green!20] at (6,3) {};
\node[anchor=south west,draw,minimum width=1.6*\u,minimum height=\v,fill=green!20] at (5,4) {};
\node[anchor=south west,draw,minimum width=2*\u,minimum height=\v,fill=green!20] at (5,5) {};
\node[anchor=south west,draw,minimum width=2.2*\u,minimum height=\v,fill=green!20] at (6.4,6) {};
\node[anchor=south west,draw,minimum width=2*\u,minimum height=\v,fill=green!20] at (6.4,7) {};

\node[anchor=south west,draw,minimum width=1.8*\u,minimum height=\v,fill=green!20] at (6.6,1) {};
\node[anchor=south west,draw,minimum width=1.6*\u,minimum height=\v,fill=green!20] at (6.6,4) {};

\node[anchor=south west,draw,minimum width=1.6*\u,minimum height=\v,fill=gray!20] at (7,0) {};
\node[anchor=south west,draw,minimum width=0.2*\u,minimum height=\v,fill=gray!20] at (8.4,1) {};
\node[anchor=south west,draw,minimum width=0.6*\u,minimum height=\v,fill=gray!20] at (8,2) {};
\node[anchor=south west,draw,minimum width=0.2*\u,minimum height=\v,fill=gray!20] at (8.4,3) {};
\node[anchor=south west,draw,minimum width=0.4*\u,minimum height=\v,fill=gray!20] at (8.2,4) {};
\node[anchor=south west,draw,minimum width=1.6*\u,minimum height=\v,fill=gray!20] at (7,5) {};
\node[anchor=south west,draw,minimum width=0.2*\u,minimum height=\v,fill=gray!20] at (8.4,7) {};

\draw[dashed] [black] (7,-0.2) -- (7,0);
\node[anchor=north] at (7,0) {$t_4$};

\end{tikzpicture}
\end{center}
\caption{An example of timeline execution with \Alg{incremental_greedy} with $M=2$ and $q=2q_2=4q_1=8q_0$ and $n_0=10$ (green), $n_1=4$ (red) and $n_2=2$ (blue) samples. At $t=t_1$, the $n_2$ samples requiring $q_2$ processors have been assigned and samples that require $q_1$ processors start to be assigned to the least loaded processors (those in the bottom half). At $t=t_3$, the $n_1$ samples requiring $q_1$ processors have been assigned and samples that require $q_0$ processors start to be assigned. At $t=t_4$ there are no more samples to assign and some processors become idle. Idling time, occurring only at the end of the calculation, is shown in gray.}
\label{fig:greedy-execution}
\end{figure}

\begin{theorem}\label{th:2ap}
  Given $q_0,q_1,\ldots,q_M$ satisfying $q_0|q_1|...|q_M|p$, Algorithm \ref{alg:incremental_greedy} is a 2-approximation.
\end{theorem}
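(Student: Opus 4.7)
The plan is to adapt Graham's classical greedy list-scheduling analysis by exploiting the hierarchical processor structure provided by the divisibility chain $q_0|q_1|\ldots|q_M|p$.

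First I would fix, for each $k$, a partition of the $p$ processors into $p/q_k$ disjoint $q_k$-groups, nested so that each $q_k$-group is contained in a single $q_{k+1}$-group. The heart of the argument is a structural invariant: at any moment during the execution of Algorithm \ref{alg:incremental_greedy}, if the next task to be scheduled has size $q_k$, then the $q_k$ processors inside every $q_k$-group share a common workload. I would prove this by induction on the algorithm's iterations. The base case is trivial since all workloads start at zero. For the inductive step, the invariant at the level of $q_{k+1}$-groups (or the initial zero state) together with $q_k | q_{k+1}$ ensures that the greedy rule \emph{assign to $q_k$ processors of minimum workload} can always be realized by choosing a single $q_k$-group; then adding $w^{q_k}_j$ uniformly to the $q_k$ members of that chosen group preserves equality within every $q_k$-group throughout the processing of size-$q_k$ tasks.

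With the invariant in hand, I would run the one-processor greedy argument at the group level. Let $W$ be the makespan produced by Algorithm \ref{alg:incremental_greedy}, let $p^*$ be a processor attaining $W$, and let $T^{q^*}_{j^*}$, of duration $w^*$, be the last task ever assigned to $p^*$. Writing $g^*$ for the $q^*$-group containing $p^*$, the invariant forces every processor in $g^*$ to have workload exactly $W-w^*$ immediately before this assignment. Because the algorithm selected $g^*$ as a minimum-workload $q^*$-group, and because the average workload across the $p/q^*$ such groups equals (total work processed so far)$/p$ and is therefore at most $W_{\rm T}/p=\overline{W}$, I conclude $W-w^* \le \overline{W} \le W_{\rm LB}$. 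Combining with $w^* \le \max_{q,j} w^q_j \le W_{\rm LB}$ and $W_{\rm LB} \le W_{\rm opt}$ yields $W \le 2 W_{\rm opt}$.

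The main obstacle is the invariant: one has to justify that the ``$q_k$ processors with least workload'' selected greedily can always be taken to be a single hierarchical $q_k$-group rather than a set straddling several $q_{k+1}$-groups, and that this choice cascades correctly from the coarsest size $q_M$ downward. This is precisely where the divisibility assumption is used; the scheduling algorithm itself is well-defined without it, but the clean hierarchical invariant that drives the proof, and the reduction to a standard single-machine greedy analysis on the quotient problem, rely on $q_k \mid q_{k+1}$ at every level.
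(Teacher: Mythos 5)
Your proposal is correct and follows essentially the same route as the paper's proof: the divisibility chain $q_0|q_1|\cdots|q_M|p$ is used to show that the greedy assignment can always be realized on nested hierarchical groups (your equal-workload invariant is just a more explicit phrasing of the paper's observation that each $q_l$-group splits cleanly into $q_l/q_{l-1}$ subgroups with no idling), after which both arguments conclude with Graham's classical bound $W \le (W - w^*) + w^* \le \overline{W} + \max_{q,j} w^q_j \le 2W_{\rm LB} \le 2W_{\rm opt}$. No gaps; the only difference is presentational.
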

\begin{proof}
  At the beginning the set of $p$ processors is divided into groups $p/q_M$ of $q_M$ processors. Because $q_{l-1}|q_l$, once tasks $T^{q_l}_j, j=1,\ldots,n_{q_l}$ have been executed, each group of $q_l$ processors can be divided into $q_l/q_{l-1}$ groups of $q_{l-1}$ tasks each, which are ready to execute tasks of the group $T^{q_{l-1}}_j, j=1,\ldots,n_{q_{l-1}}$. The previous statement is true for any $l=M,\ldots,0$ and therefore, no processor idles until there are no more tasks to execute. 
  Once this fact is established, the proof is similar to the case of single processor tasks given in \cite{Graham1966} (see \cite{Vazirani2003} for the simplified proof followed here). Consider the group of processors that finishes last and let $w_{\rm s}$ be the starting time of the last executed task and $w_{\rm last}$ its execution time. Because the algorithm assigns tasks to processors with less workload, all the rest of the processors are busy at time $w_{\rm s}$ and therefore $w_{\rm s} < \overline{W}$. Because $w_{\rm last} < \max_{i,j}\{w^i_j\}$ we get $w_{\rm s} + w_{\rm last} < 2W_{\rm LB}<2W_{\rm opt}$.
\end{proof}

\begin{remark}\label{rm:odd}
  In the cases where the hypothesis of \Th{2ap} are not satisfied it is still possible to apply the rationale behind \Th{2ap}. After all tasks $T^{q_l}_j, j=1,\ldots,n_{q_l}$ have been executed,  the group of $q_l$ processors can be divided into $\lfloor q_l/q_{l-1} \rfloor $  subgroups to execute tasks $T^{q_{l-1}}_j, j=1,\ldots,n_{q_{l-1}}$ and there will be $r_l = q_l \mod q_{l-1}$ processors left. If $q_m | r_l $ for some $m<l$ they can be assigned to execute tasks $T^{q_m}$, otherwise they became idle. However, even if these $r_l$ processors can be assigned further work, it is possible that they idle before the rest of tasks have been executed. For instance, consider the case of $q_0=1, q_1=4, q_2=10$. After execution of tasks $T^{10}$ is finished two groups of $4$ processors are created and tasks in $T^4$ start being executed while the remaining two processors execute tasks in $T^1$. If all tasks in $T^1$ are executed before those in $T^4$ these two processors start idling. However, if $n_1 \gg n_4$, which is the case of the \ac{mlmc}, a good balance can be expected in practice.
\end{remark}

\begin{remark}\label{rm:indep}
It is important to keep in mind the hypothesis of independence of tasks made at the beginning of this section. \Th{2ap} is only valid under this assumption. However, this is actually the case in the standard \ac{mlmc} algorithm described in \Sec{smlmc}, apart from some simple post-processing at the end of the execution. In the case of the \ac{amlmc} algorithm described in \Sec{amlmc} additional synchronization occurs at each iteration of the algorithm, introducing a dependency between tasks, i.e. those in the second iteration {\it must} be executed after those in the first. Therefore, the results of this section actually apply to the scheduling of tasks required to complete one iteration of the \ac{amlmc} which, in any case, represents a substantial amount of computational work.
\end{remark}

\section{A message passing implementation}
\label{sec:implementation}

In this section we describe an implementation of the algorithm described in \Sec{scheduling} developed on top of the \ac{mpi} standard. The two main ingredients of the implementation are a strategy for the partition of the set of processors into groups for parallel sampling, described in \Sec{partition}, and a dynamic scheduling algorithm, described in \Sec{dynamic}. We also include a modification that parallelizes the master coordinator that is designed for improving scalability at the most extreme scales, described in \Sec{managers}.

\subsection{Processor partition strategy}
\label{sec:partition}

The whole strategy is based on a master-slave approach, with a master processor, the coordinator, in charge of the decisions required for scheduling \ac{mlmc} tasks and the remaining slave processors in charge of sampling (also referred as slaves in the following). The communication between the master and slaves required to implement \Alg{incremental_greedy} is described in \Sec{dynamic}. We therefore assume we are given $p+1$ processors, one master and $p$ slaves.

As described in \Sec{scheduling}, slave processors will be required to work on different samples and, more importantly, coordinate with different slaves depending on the sample they are executing. Therefore, different partitions are required at different instants during the calculation. To satisfy this requirement we generate a family of partitions of the set of slaves. The total number of partitions, denoted by $M$ as in the previous section, is fixed during the calculation and levels of the \ac{mlmc} algorithm are mapped to each partition. Therefore, the same index $\lev$ is used in the following to denote the \ac{mlmc} level and the partition in which samples of $Y_\lev$ run. When the \ac{amlmc} algorithm described in \Sec{amlmc} is used, a value of $M$ must be defined by the user such that $M>L$ during the whole calculation and the number of samples for levels $\lev>L$ are set to $0$.

The family of partitions is generated in a hierarchical manner, which requires $q_0<q_1< \ldots <q_M<p$. The set of slave processors is first divided into $\lfloor p/q_M \rfloor$  groups of $q_M$ and a group with $r_M=p \mod q_M$ processors. In a second step, each group of $q_M$ processors is divided into $\lfloor q_M|q_{M-1} \rfloor$ groups of $q_{M-1}$ and one group of $r_{M-1}=q_M \mod q_{M-1}$ processors. The remaining group of $r_M$ processors is divided into $\lfloor r_M|q_{M-1} \rfloor$ (which can be $0$) groups of $q_{M-1}$ processors and one group of size $r_{M-1}=r_M \mod q_{M-1}$. The process continues recursively until the final groups of $q_0$ processes are generated. The first processor of each group, which we will refer to as the root, will play an important role in the algorithm described in \Sec{dynamic}.

To fix ideas, consider $p+1=33$ processors that will be used to sample with tasks requiring $q_2=16$, $q_1=8$ and $q_0=4$ processors each, as illustrated in \Fig{good-partitions}. The $p=32$ slave processors will be first split into $2$ groups of $16$ processors each. The second partition of the family will be constructed splitting each of them into $2$ groups of size $8$. Finally these $4$ groups will be split into $2$ sub-communicators having $4$ processors each. Each slave processor will belong to $3$ different groups. In this way, any slave processor may be used for the execution of a task requiring $4$, $16$ or $32$ processors.

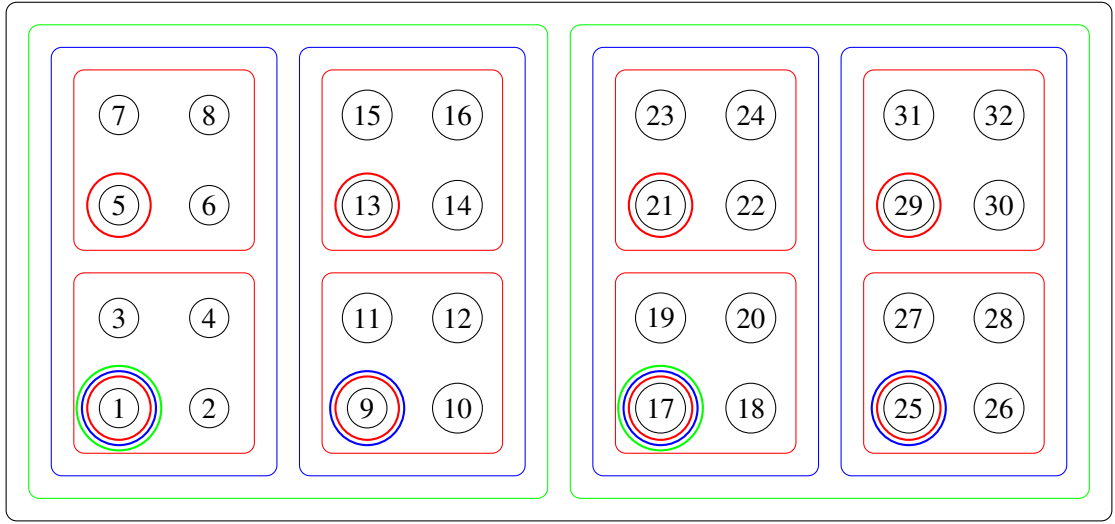
\begin{figure}
\begin{center}
\begin{tikzpicture}[scale=0.3]
\tikzstyle{proc}=[circle,draw=black,minimum size=8pt,inner sep=2pt]
\tikzstyle{root1}=[circle,draw=green,thick,minimum size=32pt,inner sep=2pt]
\tikzstyle{root2}=[circle,draw=blue,thick,minimum size=28pt,inner sep=2pt]
\tikzstyle{root3}=[circle,draw=red,thick,minimum size=24pt,inner sep=2pt]

\draw [rounded corners, black] (0,0) rectangle (49,23);

\newcounter{nodenumber}
\foreach \tx in {0,24}{
  \node [root1] at (\tx+5,5) {};
  \draw [rounded corners, green]  (\tx+1,1) rectangle (\tx+24,22);
  \foreach \sx in {0,11}{
    \node [root2] at (\tx+\sx+5,5) {};
    \draw [rounded corners, blue]  (\tx+\sx+2,2) rectangle (\tx+\sx+12,21);
    \foreach \x in {3}{
      \foreach \y in {3,12}{
        \node [root3] at (\tx+\sx+\x+2,\y+2) {};
        \draw [rounded corners, red]  (\tx+\sx+\x,\y) rectangle (\tx+\sx+\x+8,\y+8);
        \foreach \dy in {2,6}{
          \foreach \dx in {2,6}{ 
            \stepcounter{nodenumber}
            \node [proc] at (\tx+\sx+\x+\dx,\y+\dy) {\arabic{nodenumber}};
          }
        }
      }
    }
  }
}
\end{tikzpicture}
\end{center}
\caption{A set of $33$ processors and their family of partitions into groups of $q_2=16$, $q_1=8$ and $q_0=4$ processors each. Here rank $0$ is the master (not shown) and ranks $1$ to $32$ are slaves. They are first (concurrently) split into groups of $16$ processors represented by green lines with roots $1$ and $17$. These groups are then split into groups represented by blue lines, whose roots are $1$, $9$, $17$ and $25$. Finally the groups are subsequently split into groups having $4$ ranks each, which are represented in red (also used to signal their roots).}
\label{fig:good-partitions}
\end{figure}

On the other hand, consider $p+1=31$ processors that will be used to sample with tasks requiring $q_2=15$, $q_1=6$ and $q_0=3$ processors each, as illustrated in \Fig{bad-partitions}. The $p=30$ slave processors will be first split into $2$ groups of $15$ processors each. The second partition of the family will be constructed splitting each group into $2$ groups having $6$ processors and one group having $3$ processors. Finally, each group of the second partition will be split into one or two groups having $3$ processors.

\begin{figure}
\begin{center}
\begin{tikzpicture}[scale=0.3]
\tikzstyle{proc}=[circle,draw=black,minimum size=8pt,inner sep=2pt]
\tikzstyle{idle}=[rectangle,draw=black,minimum size=8pt,inner sep=6pt]
\tikzstyle{root1}=[circle,draw=green,thick,minimum size=32pt,inner sep=2pt]
\tikzstyle{root2}=[circle,draw=blue,thick,minimum size=28pt,inner sep=2pt]
\tikzstyle{root3}=[circle,draw=red,thick,minimum size=24pt,inner sep=2pt]

\draw [rounded corners, black] (0,0) rectangle (35,34);

\newcounter{nodenum}
\foreach \sx in {0,17}{
  \foreach \sy in {0}{
    \node [root1] at (\sx+5,5) {};
    \draw [rounded corners, green]  (\sx+1,1) rectangle (\sx+17,33);

    \node [root2] at (\sx+5,5) {};
    \draw [rounded corners, blue]  (\sx+2,2) rectangle (\sx+16,13);
    \node [root2] at (\sx+5,17) {};
    \draw [rounded corners, blue]  (\sx+2,14) rectangle (\sx+16,25);
    \node [root2] at (\sx+5,29) {};
    \draw [rounded corners, blue]  (\sx+2,26) rectangle (\sx+16,32);

    \foreach \dy in {3,8,15,20,27}{
      \node [root3] at (\sx+5,\dy+2) {};
      \draw [rounded corners, red]  (\sx+3,\dy) rectangle (\sx+15,\dy+4);
      
      \foreach \dx in {5,9,13}{ 
        \stepcounter{nodenum}
        \node [proc] at (\sx+\dx,\dy+2) {\arabic{nodenum}};
      }
    }
  }
}

\end{tikzpicture}
\end{center}
\caption{A set of $31$ processors and their family of partitions into groups of $q_2=15$, $q_1=6$ and $q_0=3$ processors each. Here rank $0$ is the master (not shown) and ranks $1$ to $30$ are slaves. They are first (concurrently) split into two groups of $15$ processors represented by green lines with roots $1$ and $16$. These groups are then split into groups represented by blue lines, whose roots are $1$, $7$, $13$, $16$, $22$ and $28$. Finally these groups are subsequently split into groups having $3$ ranks each, which are represented in red (also used to signal their roots).}
\label{fig:bad-partitions}
\end{figure}
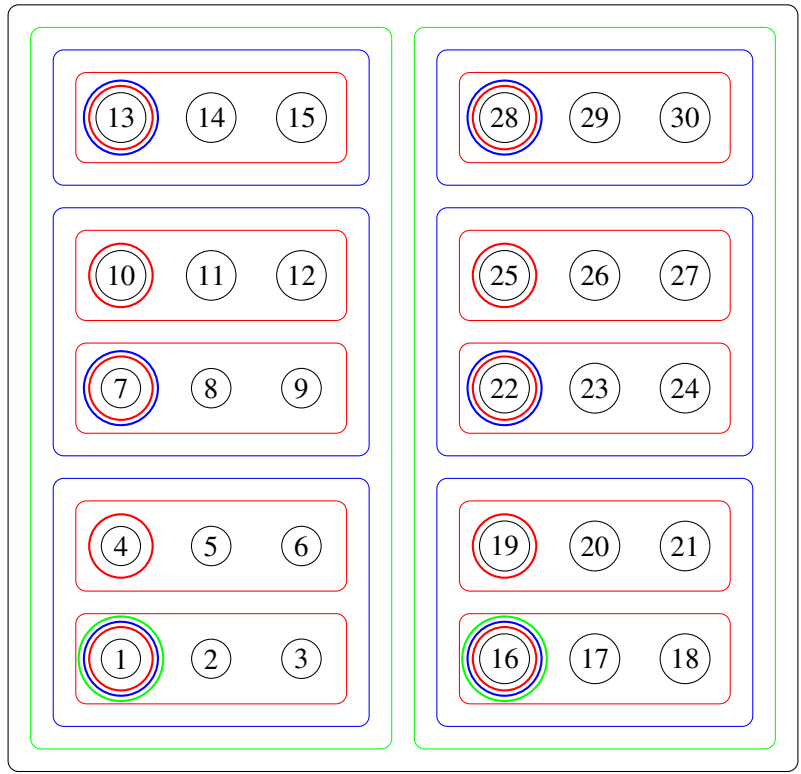

The groups of processors are managed using the primitives provided by the \ac{mpi} standard acting on \ac{mpi} communicators. This allows copying \vbt{MPI\_comm\_world} using \vbt{MPI\_comm\_dup} and then to split it into new (sub-)communicators calling \vbt{MPI\_comm\_split} (a collective operation), which requires a color (defined by an integer) to identify the group. In the first partition the color is given by $t/q_M$ where $t$ is the identifier of the current task in the global communicator (a copy of \vbt{MPI\_comm\_world}), which can be obtained calling \vbt{MPI\_comm\_rank}. The second splitting is performed in the same fashion, now with a color defined by $t/q_{M-1}$ and $t$ the current task in the first communicator just created. This procedure is easily coded in a recursive function with the current communicator as argument.

\subsection{Dynamic scheduling algorithm}
\label{sec:dynamic}

In this section, we describe the proposed dynamic scheduling in \Alg{dyn_par_sched}. At the beginning of the execution, the family of partitions described in the previous section is created, which is a collective operation. After this initial phase, which also includes the allocation of required data structures, the calculation of the \ac{mlmc} estimator proceeds as shown in \Alg{dyn_par_sched} (see \Line{SumYs}, \Line{ReduceYs}, and \Line{SumMeanYs}).
 
Although the \ac{mpi} standard permits a multiple-program multiple-data execution model, the \ac{spmd} execution model is widely used and the implementation described here is made in this way. In this programming model, it is possible to perform different tasks on different processors coding the logic in terms of the \ac{mpi} ranks, which are obtained calling \vbt{MPI\_comm\_rank}, as mentioned before. A call to this function on the global communicator permits determining if the current process is the master (with rank $0$) or a slave (with positive rank) and a call on a given subcommunicator if it is the root (with rank $0$) of that group or not. This functionality is implemented in functions \vbt{i\_am\_master}, \vbt{i\_am\_slave}, and \vbt{i\_am\_root} (which receive the current level to specify the partition) that return true or false. It is also worth emphasizing that in the \ac{spmd} programming model variables initialized in \Line{init1} to \Line{init3} of \Alg{dyn_par_sched} are available in all processors, possibly with different (or no) use on each of them.

The communications are organized in two layers: between the master and the roots and between the roots and the rest of slaves. The master does not exchange any message, and therefore does not require any synchronization, with a slave that is not a root on some partition. Roots are in charge of communicating with the master to receive duties and with the rest of slaves to send them. There are two duties, \vbt{DoSample} and \vbt{NextPartition}, that the master assigns to each root (and roots to the rest of slaves).\footnote{Actually it is possible to have more duties, such as \vbt{allocate} and \vbt{free}, but their need depends on the software design and are not relevant for the description of the algorithm.}

All slave processors start working in the finest partition $\lev=M$ (see the initialization at \Line{init3} in \Alg{dyn_par_sched}) and emit a broadcast to get the duties from their root in \Line{Broadcast} of \Alg{dyn_par_sched}. This broadcast is executed in the slave communicators of level $\lev$, e.g. in the green communicators shown in \Fig{good-partitions} and all slaves wait for their root (processors $1$ and $17$ in \Fig{good-partitions}) to get the duty. Each root communicates to the master that the group is ready to execute at level $\lev$ (see \Line{SendToMaster}) and receives the instruction to either sample (with the duty \vbt{DoSample}) or proceed to the next partition (with the duty \vbt{NextPartition}) at \Line{ReceiveFromMaster}. This duty is defined by the master at \Line{DefineDuty} depending on the number of samples required at that level, which is received from the root at \Line{ReceiveFromRoot}. The function \vbt{ReceiveFromRoot} is implemented calling \vbt{MPI\_Recv} with the wildcard \vbt{MPI\_ANY\_SOURCE} for the argument \vbt{source}, which permits receiving a message from any of the roots. While there are samples to assign, the master keeps looping at \Line{ContinueSampling} actively listening from roots. Following the example of \Fig{good-partitions}, roots $1$ and $17$ send their rank (\vbt{RootId} in \Alg{dyn_par_sched}) and the level $\lev=2$ (green) to the master, who receives one of them and assigns a sample at \Line{SendDoSample} at each iteration of the loop in \Line{ContinueSampling}. In this way the execution timeline shown in \Fig{greedy-execution}, which is consistent with the partition set partition shown in \Fig{good-partitions}, starts with the $n_2=2$ samples assigned. When the roots receive the duty \vbt{DoSample} and broadcast it, all the slaves in the group start sampling at \Line{Sample}.

While all the slaves are sampling, the master is waiting at \Line{ReceiveFromRoot} until a group finishes, which occurs at time $t_1$ in \Fig{greedy-execution}. When slaves finish sampling they go to the next iteration of the loop in \Line{ContinueSampling} and arrive again to the broadcast at \Line{Broadcast}, except the root that sends its rank and partition to the master at \Line{SendToMaster}. After receiving this message at \Line{ReceiveFromRoot}, the master proceeds then to decide again if there are samples to assign at \Line{DefineDuty}. In the execution timeline of \Fig{greedy-execution} all the $n_2$ samples have been assigned at time $t_1$ so the master sends the duty \vbt{NextPartition} to root $1$ at line \Line{SendNextPartition}. After receiving and broadcasting this duty, slaves in this group go to the next partition at \Line{NextPartition} and the group is divided into two groups shown in blue in \Fig{good-partitions} with roots $1$ and $9$. In the next iteration of the loop, they arrive again to \Line{Broadcast} and to \Line{SendToMaster} but now with $\lev=1$. After the communication with the master they receive the duty \vbt{DoSample} and they start sampling at level $\lev=1$ while processors $17$ to $32$ are sampling at level $\lev=2$. The same process occurs at time $t_3$, now reaching level $\lev=0$. When slaves finish sampling, if there are still samples to assign, the master assigns them samples at the same level, which occurs at time $t_2$ in \Fig{greedy-execution}. Observe that it may even happen that samples assigned on partitions separated by more than one level are executed concurrently.

At some point all samples have been assigned and $n_\lev=N_\lev$ for all $\lev=0,\ldots,M$. When the next sample finishes, which occurs at time $t_4$ in \Fig{greedy-execution}, the master will send it down in the partition hierarchy and if $\lev=0$ that will imply for the slaves to go to $\lev=-1$ which will make them to break the loop at \Line{BreakContinueSamplingSlaves}. The master will count the roots that left the execution and will break the loop at \Line{BreakContinueSamplingMaster} when they reach $p/q_0$ ($8$ in the execution timeline of \Fig{greedy-execution}). 

To reduce communication, the samples are not typically communicated one at a time, but in batches of samples. These batches are always sequential ranges of integers which partition the set $\{1,\cdots,N_\lev\}$. Instead of communicating each individual sample, instead each individual batch of samples is communicated, by indicating the smallest and largest sample in the batch. This batch size, denoted $b$ is determined dynamically, depending on the current number of assigned samples on the particular level, $n_\lev$, and the total number of assigned samples on the particular level, $N_\lev$, as well as the number of partitions on that level, $p|q_\lev$. Each batch size proportional to the current number of unassigned samples, denoted by $N_\lev - n_\lev$, divided by the total number of samples $N_\lev$. More specifically,
\begin{equation}
\label{eq:batch}
 b_\lev = \lceil\frac{(N_\lev-n_\lev)}{N_\lev(p|q_\lev)}\rceil,
\end{equation}
subject to minimum and maximum batch sizes that can be modified by parameters. The default minimum is \rev{1\% of } $\lceil N_\lev/(p|q_\lev)\rceil$ (the total number of samples per partition), and the default maximum is \rev{$\varphi^{-1}$} of $\lceil N_\lev/(p|q_\lev)\rceil$, \rev{where $\varphi$ is the golden ratio (so the ratio between the total and unassigned samples equals the ratio between unassigned and assigned ones), $\varphi\approx 0.62$} . These are the values used in all examples here. Naturally, should the size of the batch be larger than the number of remaining samples, the batch only contains the remaining samples.

After the sampling loop at \Line{ContinueSampling} is finished, there is a reduction process to get the final estimate $\widetilde{Q}_L$. Observe that partial sums have been performed during sampling, keeping them on root processors, see \Line{SumYs}. At the end of the sampling loop, slave processors send these partial sums to the master at \Line{SendYsToMaster}, which are received (using local variables $Y_\lev$) at \Line{ReceiveYsFromRoot}, accumulated at \Line{ReduceYs} and finally weighted and added to compute the final estimate in \Line{SumMeanYs}. If executing \ac{amlmc}, the number of samples and the maximum level are updated in \Line{UpdateL} and \Line{UpdateN} respectively and the whole process is repeated until convergence is reached.

\begin{algorithm}
  \caption{Dynamic parallel scheduling to obtain  $\widetilde{Q}_L$ := \ac{amlmc}($L_0$,$M$,$\{N^0_\lev\}$,$\epsilon$)\label{alg:dyn_par_sched}}
  Set $N_\lev=N^0_\lev $, $n_\lev=0$, $\overline{Y_{\lev}} = 0$, for all $\lev=0,\ldots,L_0$ \; \nllabel{line:init1} 
  Set $N_\lev=0 $, $n_\lev=0$, $\overline{Y_{\lev}} = 0$, for all $\lev=L_0+1,\ldots,M$ \;\nllabel{line:init2}
  Set $L=L_0$, $\lev=M$, $Q_L$, $\widetilde{Q}_L = 0 $, $nr=0$, $e=\epsilon$ \;\nllabel{line:init3}
  \While{$e>\epsilon$ and $L<M$}{
    ContinueSampling=true\;
    \While{ContinueSampling}{ \nllabel{line:ContinueSampling}
      \uIf{i\_am\_master}{
        ReceiveFromRoot(RootId,$\lev$)\; \nllabel{line:ReceiveFromRoot}
        \eIf{$n_\lev<N_\lev$}{ \nllabel{line:DefineDuty}
          Set $n_\lev=n_\lev+ b_\lev(n_\lev, N_\lev, p|q_\lev)$ \rev{with $b_\lev$ given by \Eq{batch} or \Eq{batch_par}} \;
          SendToRoot(RootId,DoSample)\; \nllabel{line:SendDoSample}
        }{
          SendToRoot(RootId,NextPartition)\; \nllabel{line:SendNextPartition}
          \If(\tcc*[h]{Count roots that finished}){$\lev=0$}{
            $nr=nr+1$ \;
            \If{$nr=p/q_0$}{
              Set ContinueSampling=false\; \nllabel{line:BreakContinueSamplingMaster}
            }
          }
        }
      }
      \ElseIf{i\_am\_slave}{
        \If{i\_am\_root($\lev$)}{
          SendToMaster(RootId,$\lev$)\; \nllabel{line:SendToMaster}
          ReceiveFromMaster(duty)\; \nllabel{line:ReceiveFromMaster}
        }
        Bcast(duty)\; \nllabel{line:Broadcast}
        \uIf{duty is sample}{ \nllabel{line:Sample}
          Compute $Y_\lev$\;
          \If{i\_am\_root($\lev$)}{
            Set $\overline{Y_{\lev}} = \overline{Y_{\lev}} + Y_\lev$ \; \nllabel{line:SumYs}
          }
        }
        \ElseIf{duty is NextPartition}{ \nllabel{line:NextPartition}
          Set $\lev=\lev-1$.\;
          \If{$\lev < 0$}{
            Set ContinueSampling=false\; \nllabel{line:BreakContinueSamplingSlaves}
          }
        }
      }
    }
    \uIf{i\_am\_master}{
      \For{$i=1$ \KwTo $p/q_0$}{
        ReceiveFromRoot(Rootid,$\{Y_{\lev}\}_{\lev=0}^L$)\; \nllabel{line:ReceiveYsFromRoot}
        Set $\overline{Y_{\lev}}=\overline{Y_{\lev}}+Y_\lev$ for $\lev=0,\ldots,L$ \; \nllabel{line:ReduceYs}
      }
      Set $\widetilde{Q}_L = \sum_{\lev=0}^L \overline{Y_{\lev}}/N_\lev$\; \nllabel{line:SumMeanYs}
      \rev{Compute $e$ from \Eq{mlmc_error} and variance and cost estimates}\;
      Update $L$ using \Eq{mlmc_num_levels} with $c$ and $\alpha$ estimated from $\overline{Y_{\lev}}$\; \nllabel{line:UpdateL}
      Update $N_\lev $ for all $\lev=0,\ldots,L$ using \Eq{mlmc_num_samples} \; \nllabel{line:UpdateN}
    }
    \ElseIf{i\_am\_slave}{
      \If{i\_am\_root($0$)}{
        SendToMaster($\{\overline{Y_{\lev}}\}_{\lev=0}^L$)\; \nllabel{line:SendYsToMaster}
      }
    }
  }
      
\end{algorithm}

\subsection{Parallelizing the coordinator}
\label{sec:managers}
The above implementation has a notable potential bottleneck in the use of a single master task that is seen in the example \Sec{Pause}. This bottleneck is not currently an issue at the scale of computation usually encountered, but a potentially fast executing model requires significantly more communication throughput. It is indicative of a computational state wherein the overall processing power dramatically overshadows that of the evaluation of a single model, which is generally expected to occur at some large enough number of processors, regardless of the model.

It is anticipated that this is an important potential bottleneck in certain applications and on certain architectures. We therefore consider briefly the parallelization of the master coordinator. This is done by replacing the single task by a communication tree. Each communication task is limited to have a specified number of tasks which report to it, be they root tasks, or other coordinator tasks. This parameter is herein referred to as \vbt{comm\_limit}. This functionality works with the batch sampling mechanism mentioned in \Sec{dynamic}. Samples are allocated based not only on total and available samples, but also the number of executing partitions on the slave coordinator $i$, denoted $P_\lev(i)$, with respect to the total number of partitions that report to that master coordinator , $\sum_k P_\lev(k)$. In this case the batch, $b$, is proportional to
\begin{equation}
\label{eq:batch_par}
 b_\lev = \lceil\frac{(N_\lev-n_\lev)P_\lev(i)}{N_\lev\sum_k P_\lev(k)}\rceil,
\end{equation}
and is again subject to similar minimum and maximum values.

Functionally, this implementation can be visualized as additional partition layers of tasks as in \Fig{good-partitions} or \Fig{bad-partitions}, but with an additional task separated on each new layer representing this dedicated communication task. The layers are usually identical, but often, there is a single communication partition that can only be partially filled.
In such a case, that potential communication partition is lumped entirely into another communication partition. This implementation does not balance these loads and allows this limit on communication tasks to be violated. However, this violation occurs only on at most one task at each level of the tree, and in the worse case doubles the amount of communication on that layer. 

The implementation improves scalability for the example of \Sec{Pause}, but uses a relatively crude communication tree, and is not especially optimal. Additionally, all manager tasks exist outside of the context of the largest computational partition, which means that there is a lower bound on the roots-per-task value. For example, in an example where $L=2$ and level $0$ uses $2$ task per model evaluation, and level $1$ uses $100$ tasks per model evaluation, the lower bound on the roots-per-task value would be $100/2=50$. This is a potential problem if there are both rapidly executing serial models, and slowly executing highly parallelized models being scheduled.

\section{Numerical Examples}
\label{sec:examples}

In this section we evaluate the performance of the scheduling algorithm which was implemented in 
\FEMPAR{} \cite{badia_fempar:_2017,Badia2020}, an open source, hybrid OpenMP/MPI parallel, object-oriented (OO) FORTRAN200X scientific software framework for the massively parallel FE simulation of multiscale/multiphysics problems governed by PDEs. We developed a new layer of software that permits the execution of several instances of a model as required by sampling \ac{uq} methods. In this way it is possible to exploit advanced discretization methods provided by the \FEMPAR{} library, e.g., \ac{agfem} methods. These embedded techniques have been recently exploited in \cite{Badia2021} to perform \ac{uq} in random geometries using the \ac{mlmc} implementation described herein. \rerev{On top of these advanced discretization methods, \FEMPAR{} provides advanced built-in multilevel domain decomposition solvers \cite{badia_multilevel_2016,Badia.etal.ACME.2013} and an interface with PETSc \cite{petsc-web-page}, which, in turn, provides access to many other linear and nonlinear solvers.} The code is open source, under the GPLv3 license, and available at the \FEMPAR{} gitlab repository 
(\href{https://github.com/fempar/fempar}{https://github.com/fempar/fempar}).

\rev{Most numerical experiments presented herein have been performed in Marenostrum IV (MN-IV), a supercomputer hosted at the Barcelona Supercomputing Center (BSC). MN-IV is equipped with 3456 compute nodes connected together with the Intel OPA HPC network. Each node is equipped with 2x Intel Intel Xeon Platinum multi-core CPUs, with 24 cores each (i.e., 48 cores per node), and 96 GBytes of RAM. \FEMPAR{} was compiled with the Intel Fortran compiler (v18.0.1) with the system-recommended optimization flags, and linked against the Intel MPI Library for message-passing (2018.0.128), the BLAS/LAPACK and PARDISO available on the Intel MKL library for optimized dense linear algebra kernels, and sparse direct solvers, respectively}\rerev{, and the PETSc library (3.12.4).}

\rev{Some additional tests were perfomed in Gadi, a supercomputer hosted at the Australian National Computational Infrastructure Agency (NCI). Gadi is a equiped with 3024 nodes, each containing 2x 24-core Intel Xeon Scalable \textit{Cascade Lake} cores and 192 GB of RAM. All nodes are interconnected via Mellanox Technologies' latest generation HDR InfiniBand technology.
\FEMPAR{} was compiled with the GNU Fortran compiler (v8.5.0) with the system-recommended optimization flags, and linked against the OpenMPI Library for message-passing (4.2.2), and the BLAS/LAPACK and PARDISO available on the Intel MKL library (v2020.0.166) for optimized dense linear algebra kernels, and sparse direct solvers, respectively}\rerev{, and the PETSc library (3.12.4).}

We consider two examples aimed at testing the scalability and efficiency of the scheduling software.
First, we consider a model independent evaluation of performance, testing the scheduling algorithm using a computational model that simply idles \rev{for a randomly selected time interval}, while executing in parallel in \Sec{Pause}. 

As a second example, we consider a 3D Poisson problem in \Sec{Popcorn}, with a random diffusion term that is described by a \ac{KLE} and fixed Dirichlet boundary conditions on the boundary of a randomized popcorn geometry similar to that described in \cite{Badia2021}. This Poisson model is discretized using \ac{amg}, and executed in parallel with differing core counts on each level. 

For each example we perform the standard scaling experiments, namely,
\begin{itemize}
\item strong scaling where the number of \ac{mlmc} samples (or \ac{amlmc} tolerance) is kept constant, while the number of processors is increased,
\item weak scaling where the number of \ac{mlmc} samples is increased (or \ac{amlmc} tolerance is decreased), and the number of processors is increased, so that the relative number of \ac{mlmc} samples to processors is fixed.
\end{itemize}

We consider here sample sizes (or initial sample sizes in the case of \ac{amlmc}), $N_\ell$, which are integers that experience an exponential growth,
\begin{equation}
\label{eq:sample_scaling}
N_l \propto \exp\left(\Gamma \log(s)(L-l)\right)
\end{equation}
(for $l=0,...,L$). 
\rerev{As defined in \Sec{smlmc},}\rev{$s$ is the ratio between the size of the mesh at two different levels of the hierarcy, $s=2$ in all the experiments presented herein. The constant $\Gamma$ defines the relation between the number of samples at different levels required to achieve optimal complexity by equilibrating statistical and spatial approximation errors. The optimal value depends on the properties of the discretization and solution algorithms as well as on the regularity of the problem. In the case of the solution of the Poisson problems in smooth domains using linear \acp{fe} and solving the resulting linear system with the conjugate gradient method, $\Gamma=4$ is optimal, as validated in~\cite{Badia2021}. It is worth noting that this is not the case of the popcorn example of Section 5.2. The main motivation for the development of \ac{amlmc} methods is precisely to correct the number of samples $N_\ell$ during the excution of the algorithm to obtain an optimal complexity regardless of the value of $\Gamma$.}

To test the scalability of the scheduling algorithm, we scale the overall computational cost of the experiments by a sample multiplier, denoted by $C$, taking the number of samples at each level according to
\begin{equation}
\label{eq:sample_scaling_constant}
 N_\ell = C N_\ell^{\star},
\end{equation}
where $N_l^{\star}$ is a relatively small sample size obtained (approximately) from \Eq{sample_scaling}, whose values are given in \Tab{example_samples}.

\begin{table}[htbp]
\centering
 \caption{Sample sizes by Level given by \Eq{sample_scaling}.}
 \begin{tabular}{|c|c|c|c|c|}
 \hline
 $N^{\star}_0$ & $N^{\star}_1$ & $N^{\star}_2$\\
 \hline
 $1024$ & $64$ & $4$\\
 \hline
 \end{tabular}
 \label{tab:example_samples}
\end{table}

In the results below we present the speedup $S(p)$ defined by
\begin{eqnarray}
  \label{eq:speed_strong}
  S(p)& =  \frac{t_w(1)}{t_w(p)}.
\end{eqnarray}
where $t_w(p)$ is the wall clock time to compute the \ac{mlmc} estimate as a function of $p$.
The speedup is useful here for weak scaling, as this normalization allows us to compare results for different model evaluation times, which would be difficult to do if wall times were presented.

We also consider the effect of idling and coordination relative to the time spent in active computation. We divide the scheduling algorithm execution on each task into three parts, managing, $m(p)$, idling, $i(p)$, and active computation, $a(p)$, which we measure in core-seconds. The master task and any other coordinators are always contributing to the managing total, while executing tasks are contributing to either idling or active computation, depending on where they are in the scheduling algorithm execution. When an executing partition is in the setup or run phase, its cores contributes to the active total; otherwise, it contributes to the idling total. We define the efficiency as in \cite{Sukys2014}, to be
\begin{eqnarray}
 \label{eq:active_ratio}
 A(p)& =  \frac{a(p)}{pt_w(p)},
\end{eqnarray}
a value in $(0,1)$ that measures the overhead of the scheduling algorithm as a ratio of total core time used in the computation.

We note that our results compare most closely to the similarly performed experiments in \cite{Gmeiner2016} and \cite{Tosi2021}. Our implementation performs well generally, and especially well at addressing the difficult problem of scalability when the model has a low execution time.

\subsection{A benchmarking example}
\label{sec:Pause}
Our first example is a conceptually simple test of the scheduling algorithm scaling with a simple model class which utilizes partitions of individual processors, and whose only evaluation is to wait for a random amount of time. This model allows us to increase the stress from the communication in the scheduling algorithm. For each sample we define this time by a uniform distribution of mean $\mu$ and variance $\sigma$, i.e. a random value between $\mu-\sqrt{3\sigma}$ and  $\mu+\sqrt{3\sigma}$.
This exact knowledge of the statistics of the time required for the model evaluation permits an accurate benchmarking of the scheduling algorithm.

For the parallel models considered here we do not vary the execution time based on the level of computation, which is a reasonable approximation when the size of the problem scales in correspondence with the number of processors used to compute it. This model is quite robust to evaluating the scalability of the scheduling algorithm independently of the specific computations of the model, and with the following examples, demonstrates the benefits to parallelization of the models being used.  We consider $\mu$ in $\{10^{-4}\text{s}, 10^{-2}\text{s}, 1\text{s}\}$. Here, the short computation time of $10^{-4}$ allows us to insure a high sustained level of communication in the scheduling algorithm as the number of tasks increases, stressing the communication structure, while the longer computation times may more accurately simulate real model evaluation times.

Specifically, we consider a benchmarking example that is discretized to run on $q_0=8, q_1=64$, and $q_2=512$ cores. We refer to the $10^{-4}$s average as the Short Time Model, the $10^{-2}$s average as the Medium Time model, and the $1$s average as the Long Time model. Here in particular, this model is designed to represent a 3 dimensional \ac{amg} solver.

We also consider the same model parameters under strong scaling, with the sample multiplier $C$ fixed to $128$, while for weak scaling $C$ is the number of computational nodes. The samples for each level are given by \Eq{sample_scaling_constant} and \Tab{example_samples} for $\{16,32,64,128,256,512,600\}$ nodes.

\Fig{ParPauseScaling} shows the speedup for the two scaling regimes given by \Eq{speed_strong}, while \Fig{ParPauseNCR} shows the efficiency. We note that the parallel benchmarking performs well for the Medium Time and Long Time models, but that the Short Time model provides a significant communication stress, and that in this case, the use of additional communication tasks improve scalability.

We also consider two additional modifications, beyond the consideration of additional communicators and execution times. For these, we use only one communicator task, and $\mu = 10^{-2}$s. First, is the variance of the execution time.  We consider three cases: that of high $\sigma$, given by $\sigma_{H}=0.5\mu/\sqrt{3}\approx 0.289\mu$; that of medium $\sigma$, given by $\sigma_{M}=0.2\mu$; and that of low $\sigma$, given by $\sigma_{L}=0.01\mu$. These results are shown in \Fig{MultiSigmaScaling}. 
We note that changes to $\sigma$ provide at most a marginal change in performance. \rev{Therefore, we can conclude that the variation of execution times among samples does not significantly affect the scalability properties of the algorithm, neither in a weak or strong scaling case.}

Additionally, we consider the use of batch sampling. Using a single master communicator, we consider the case when all batch sizes are of size $1$ for $\mu = 10^{-2}$s. These results are shown in \Fig{NoBatchScaling}. 
It can be seen that there is worse scalability at higher core counts without batch sampling, as the single master communicator cannot act fast enough to efficiently communicate each individual sample to the execution partitions. \rev{Therefore, we can conclude that the use of batching sampling can be useful to improve scalability for extreme core counts.}

\rev{We end this section with the analysis of the influence of the number of cores in the scalability. In \Fig{MultiProcs} we compare the performance
of two different sets of processors for sampling, namely, $q=8,64,512$ and $q=9,81,729$. As it can be observed in \Fig{MultiProcs} the influence of this choice is minimal, 
as anticipated in \Rm{odd}. Although this result could be surprising, we note that there is always a master processor in charge of the scheduling that introduces an oddness
and therefore the set of processors is never partitioned in an optimal way in the architectures considered herein. For example, the leftmost point in all figures corresponts to 16 nodes with 48 cores, that is, $p+1=768$ processors, one of which is used for the master. The partition of $p=767$ will result in some idling in both cases. In fact, in the first case
it is possible to exploit $512+3\times64+7\times8=760$ processors whereas in the second it is possible to exploit $729+4\times9=765$ processsors. In any case, this idling does 
not increase with the number of nodes and scalability is not affected.}

\begin{figure}
  \centering
  \begin{subfigure}[b]{0.45\textwidth}
    \centering
    \includegraphics[scale=0.4]{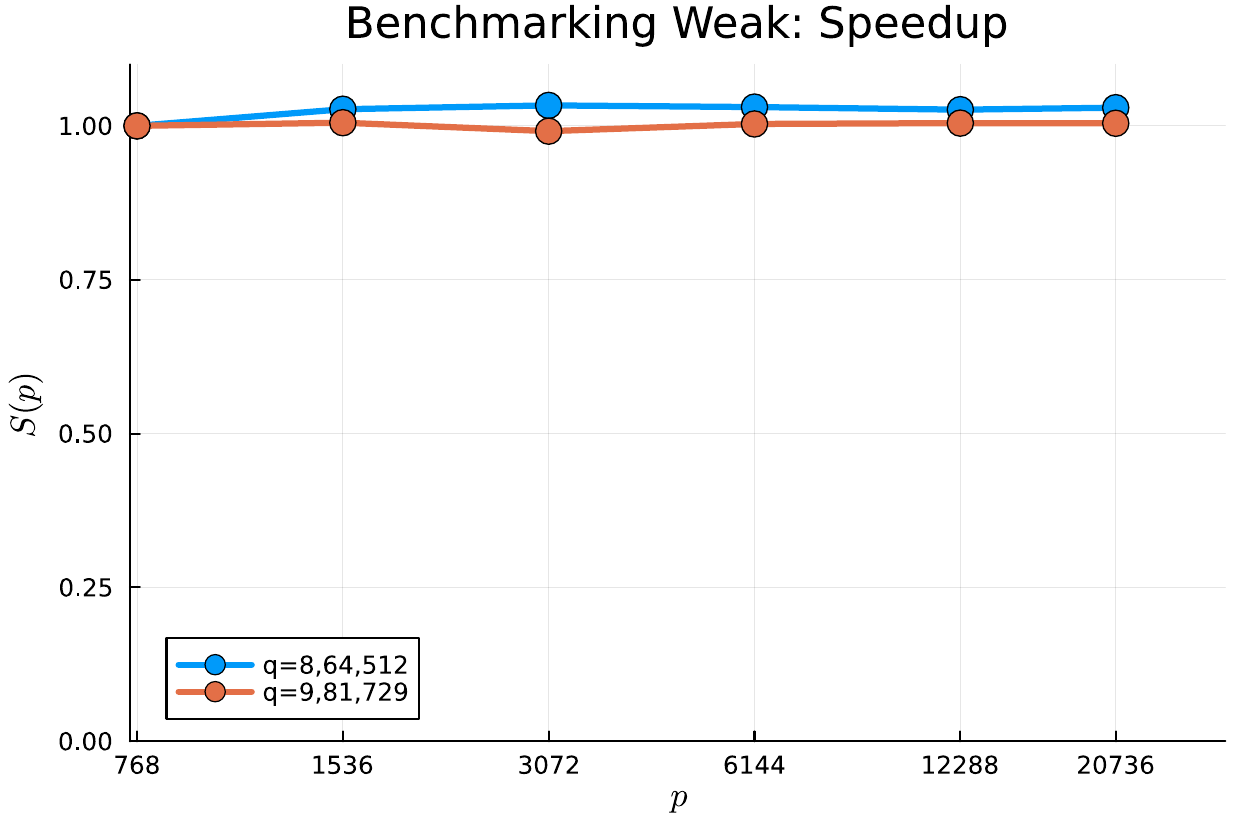}
  \end{subfigure}
  \hfill
  \begin{subfigure}[b]{0.45\textwidth}
    \centering
    \includegraphics[scale=0.4]{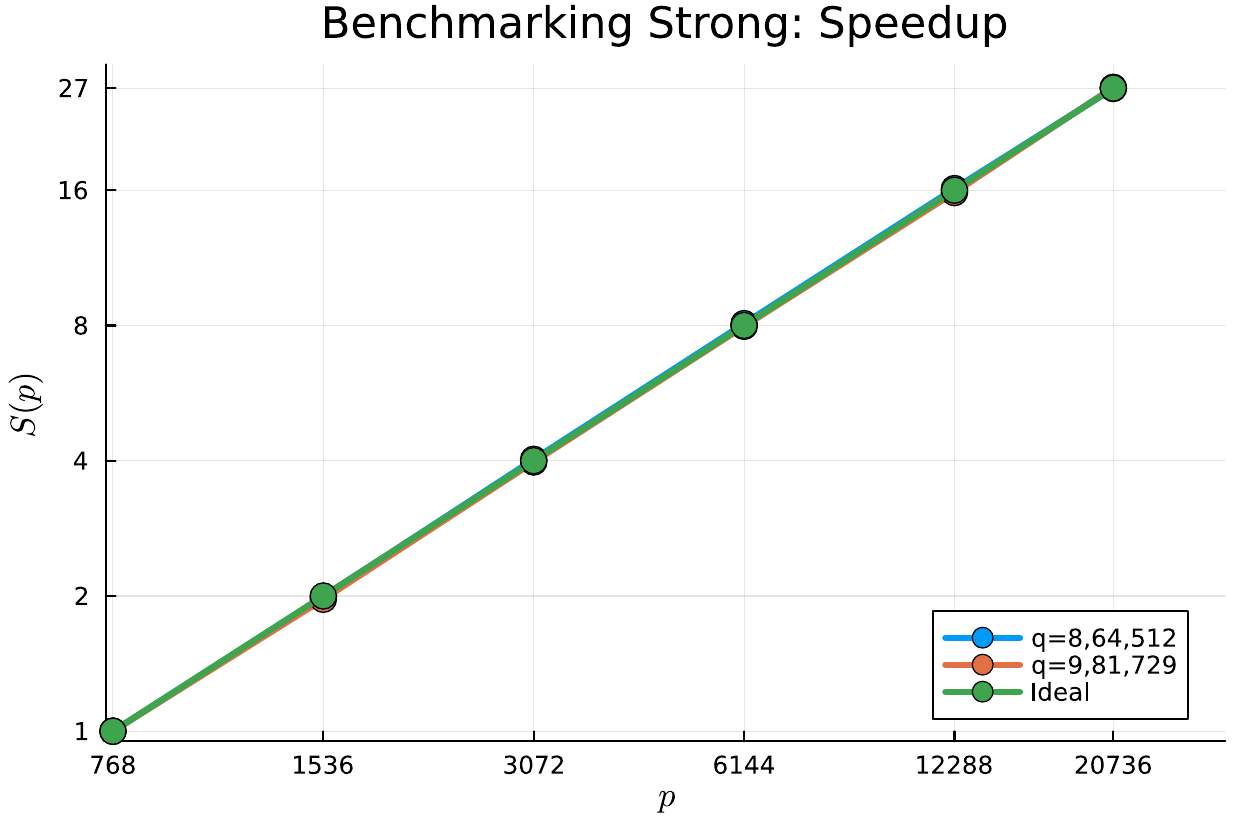}
  \end{subfigure}
  \caption{Benchmarking speedup under weak scaling (left) and strong scaling (right) \rev{for different sampling times (short, medium and large), with and without parallel coordinator of the scheduling}.}
  \label{fig:ParPauseScaling}
\end{figure}

\begin{figure}
  \centering
  \begin{subfigure}[b]{0.45\textwidth}
    \centering
    \includegraphics[scale=0.4]{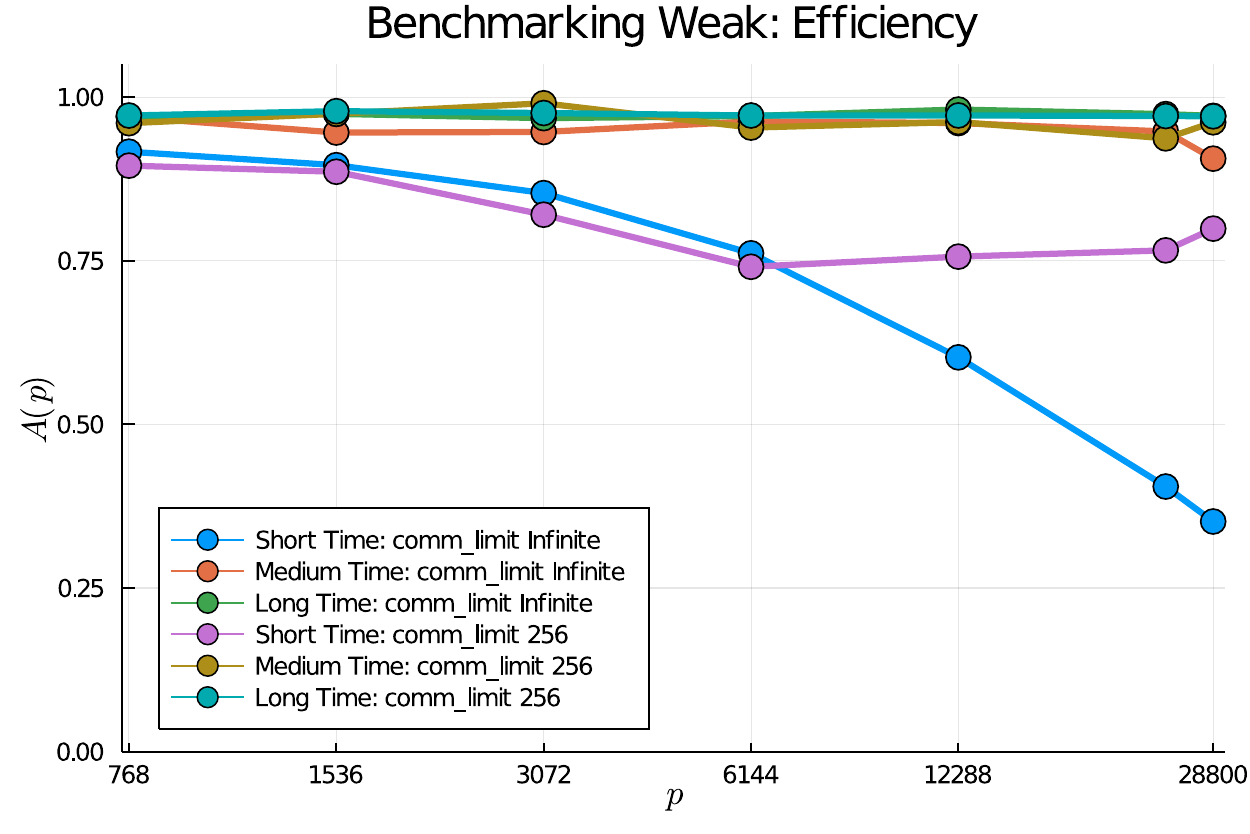}  
  \end{subfigure}
  \hfill
  \begin{subfigure}[b]{0.45\textwidth}
    \centering
    \includegraphics[scale=0.4]{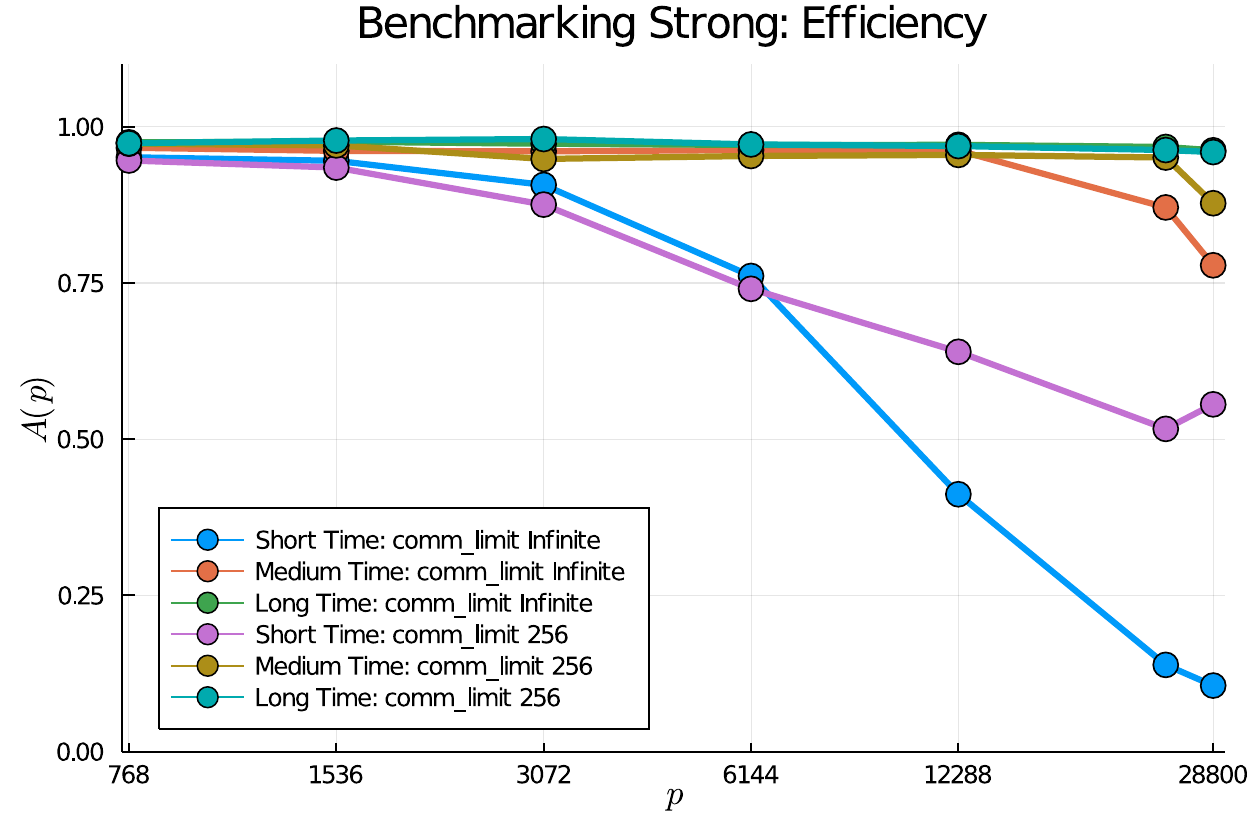}  
  \end{subfigure}
  \caption{Benchmarking efficiency under weak scaling (left) and strong scaling (right) \rev{for different sampling times (short, medium and large), with and without parallel coordinator of the scheduling}.}
  \label{fig:ParPauseNCR}
\end{figure}

\begin{figure}
  \centering
  \begin{subfigure}[b]{0.45\textwidth}
    \centering
    \includegraphics[scale=0.4]{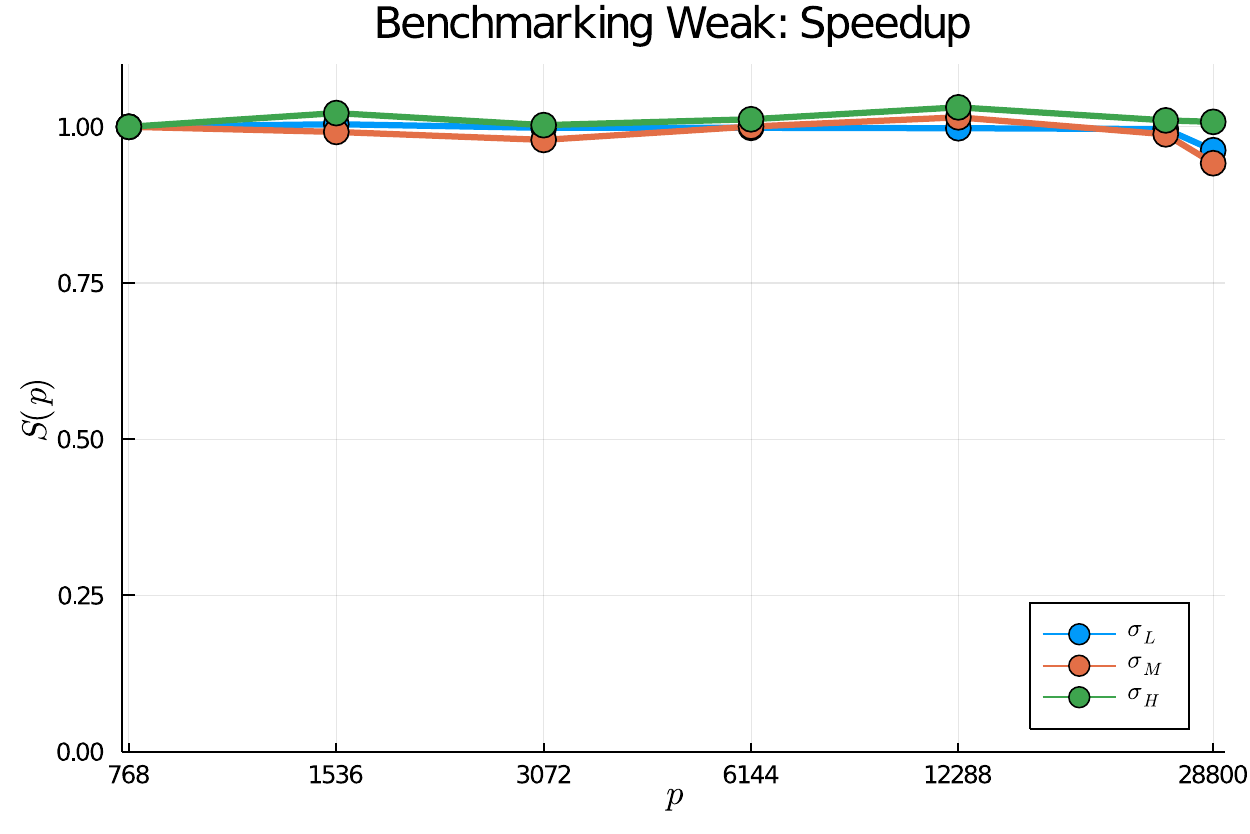}
  \end{subfigure}
  \hfill
  \begin{subfigure}[b]{0.45\textwidth}
    \centering
    \includegraphics[scale=0.4]{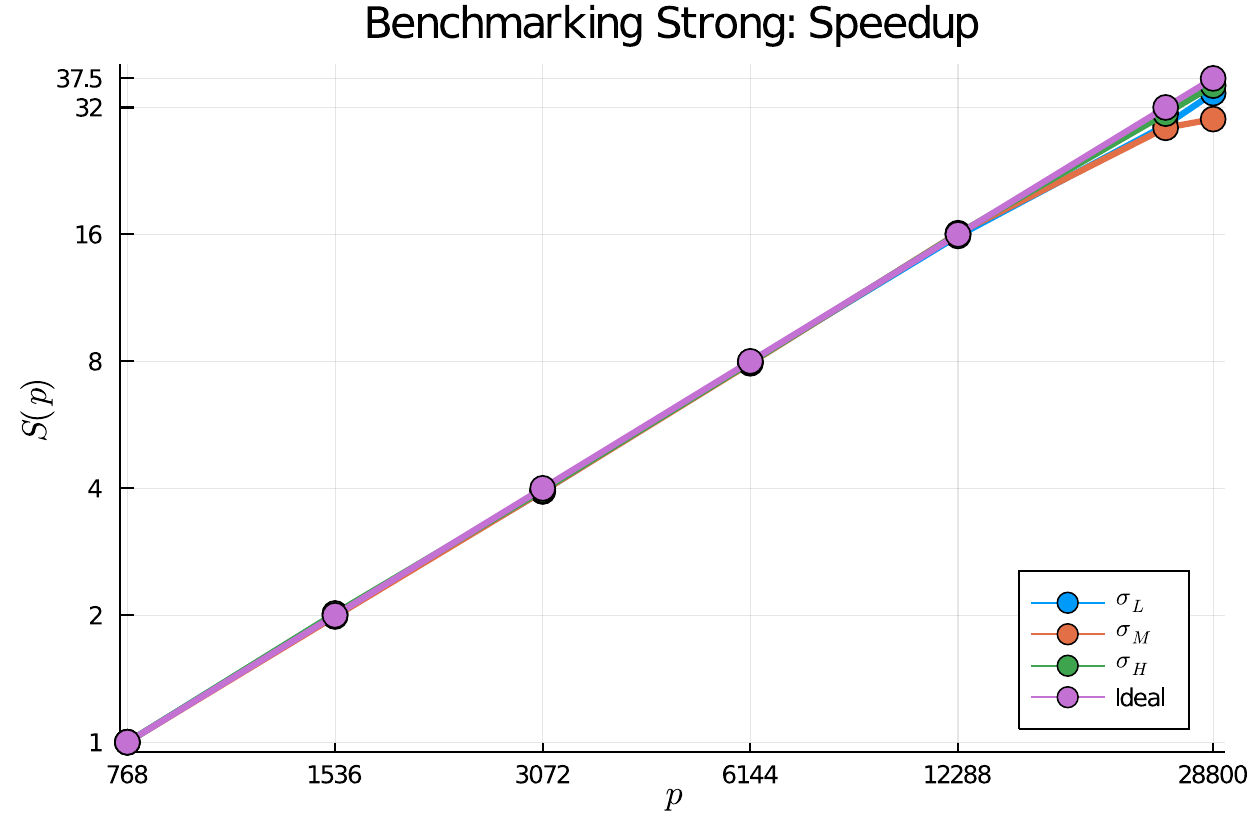}
  \end{subfigure}
  \caption{Benchmarking speedup under weak scaling (left) and strong scaling (right) \rev{for different variability in the sampling times}.}
  \label{fig:MultiSigmaScaling}
\end{figure}


\begin{figure}
  \centering
  \begin{subfigure}[b]{0.45\textwidth}
    \centering
    \includegraphics[scale=0.4]{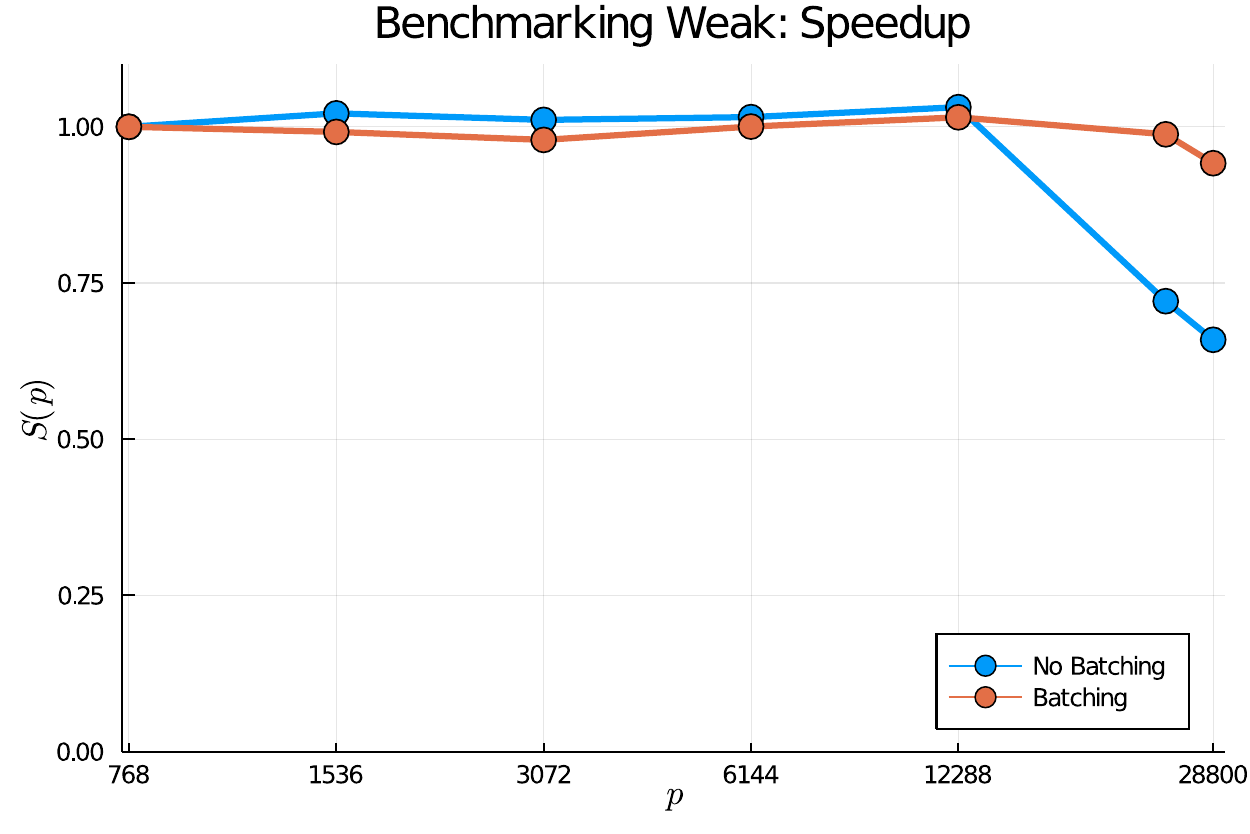}
  \end{subfigure}
  \hfill
  \begin{subfigure}[b]{0.45\textwidth}
    \centering
    \includegraphics[scale=0.4]{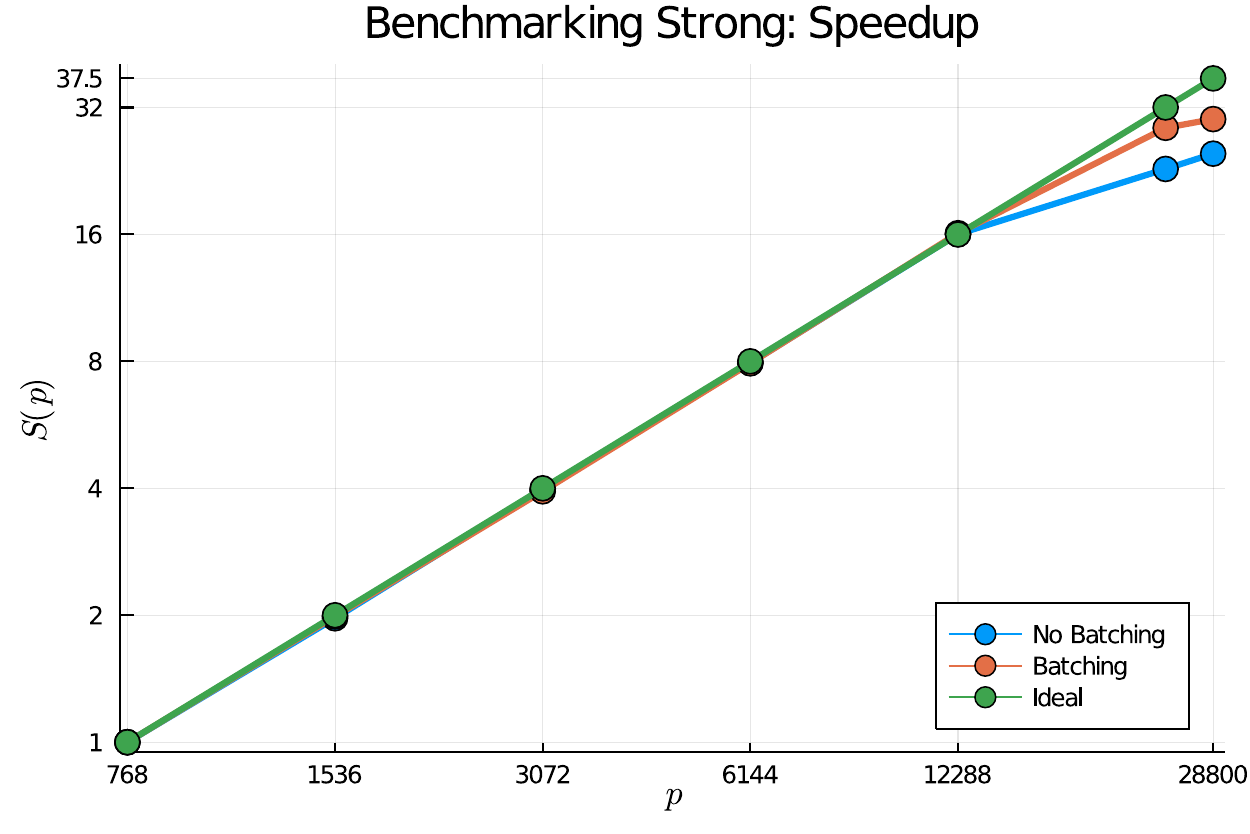}
  \end{subfigure}
  \caption{Benchmarking speedup under weak scaling (left) and strong scaling (right) \rev{with and without batch scheduling}.}
  \label{fig:NoBatchScaling}
\end{figure}


\begin{figure}
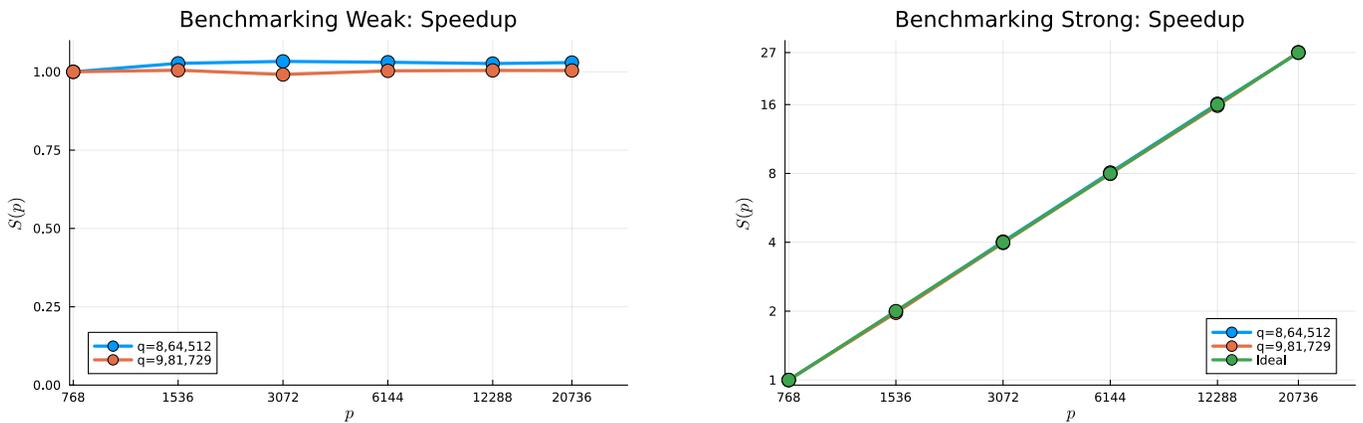

  \centering
  \begin{subfigure}[b]{0.45\textwidth}
    \centering
    \includegraphics[scale=0.4]{PauseWeakEff.pdf}
  \end{subfigure}
  \hfill
  \begin{subfigure}[b]{0.45\textwidth}
    \centering
    \includegraphics[scale=0.4]{PauseStrongEff.pdf}
  \end{subfigure}
  \caption{Benchmarking speedup under weak scaling (left) and strong scaling (right) for different sets of processors.}
  \label{fig:MultiProcs}
\end{figure}

\subsection{3D Poisson in a random domain}
\label{sec:Popcorn}
Here we consider a discretized finite element model, where the number of tasks per model evaluation depends on the level. To this end, we describe the details of a 3D Poisson problem as in \Eq{model}.

Here we assume the domain to be a stochastic "popcorn" shape as defined in \cite{Badia2021}, which provides a complex geometry that changes for each realized sample. 
\rev{This geometry is described by a relatively large and variable number of random variables, i.e. stochastic dimension is variable. It is composed of $n$ small ellipses 
added on top of a large one, with $n \in \mathcal{P}(11)$, i.e. a Poisson distributed random variable with mean value 11, requiring a total of $5n+7$ random variables to 
define the final shape. This gives a mean of $62$ random variables and the extra complexity of changing between samples.}

Moreover, the diffusion term $\kappa(\mathbf{x},\omega)$, is a \ac{KLE} where every point is log-normally distributed, and the associated normal distributed random variables are correlated by an exponential covariance function with correlation length $0.25$. The normal random variables have $0$ mean, and a standard deviation of $0.25$, with the first $16$ eigenfunctions in each dimension, used to construct the approximation with a total-order basis of order $16$. Here, total-order $16$ refers to a tensor product of polynomials in each dimension such that the sum of the orders of those polynomials is at most $16$. \rev{This gives a total of total of $680$ functions,
which require an equal number of random variable coefficients.}

The boundary conditions are given by a $\sin|\mathbf{x}|$ \rev{and the quantity of interest is the mean value of the solution in a square region inside the domain, see \cite{Badia2021}
for details.} \rerev{The linear system arising from the discretization of the \ac{pde} is solved using a
parallel algebraic multigrid method \cite{Henson2002} provided by HYPRE library \cite{hypre2006} through the PETSC interface.}

We consider $L=2$ \rev{(or $L_0=1$ in the case of \ac{amlmc})} levels of computation with $q_0=8, q_1=64$, and $q_2=512$. Here we consider a uniform coarsest grid of $16\times 16\times16$ elements, and each subsequent grid has twice the number of elements in each dimension, up to $64\times 64\times 64$ elements for $\lev=2$. The \ac{amlmc} here has initial samples on all three levels as given by \Eq{sample_scaling_constant} and \Tab{example_samples} for the appropriate $C$ for $\{16,32,64,128,256,512,600\}$ nodes. For strong scaling, we consider the same model parameters under strong scaling, with the sample multiplier $C$ fixed to $128$, while for weak scaling $C$ is the number of computational nodes.  

\Fig{PopcornScaling} shows the speedup for the two scaling regimes given by \Eq{speed_strong}.
We note that there are no noticeable issues scaling up to the $600$ nodes tested for this \ac{mlmc} example.

Finally, we utilize the \ac{amlmc} algorithm with this model, to test the effect of the synchronization points which that algorithm requires. For weak scaling we consider a set of desired tolerances for desired error, in this case of the volume of the associated geometric shape. Specifically, the error tolerances are chosen to be between $10^{-4}$ and $1.617\cdot10^{-5}$. These tolerances are chosen to allow the same initial sampling of the other 3D Poisson example, but such that those samples are insufficient to converge in the initial iteration. The strong scaling utilizes $C=128$, and an error tolerance corresponding of $3.535\cdot10^{-5}$.

We note that as the ultimate sample sizes for this experiment are variable, the performance is more variable. Of note, each execution leads to a different final achieved error. We consider a raw scaling, and also consider a scaling that mitigates this discrepancy, by scaling the results inversely proportional to the achieved error. That is, lower achieved errors will have used more samples, and thus more computational resources. We adjust the wall times to these values, so that the results are neutral with respect to the achieved error of the final \ac{amlmc} estimation.  The results are presented in \Fig{AMLMCScaling}, and \Fig{AMLMCNCR}. We note that both the strong and weak scaling are significantly more efficient for smaller $p$ for this problem, but that this is not a result of significant idling, so that the scalability is reasonable as $p$ increases.

\begin{figure}
  \centering
  \begin{subfigure}[b]{0.45\textwidth}
    \centering
    \includegraphics[scale=0.4]{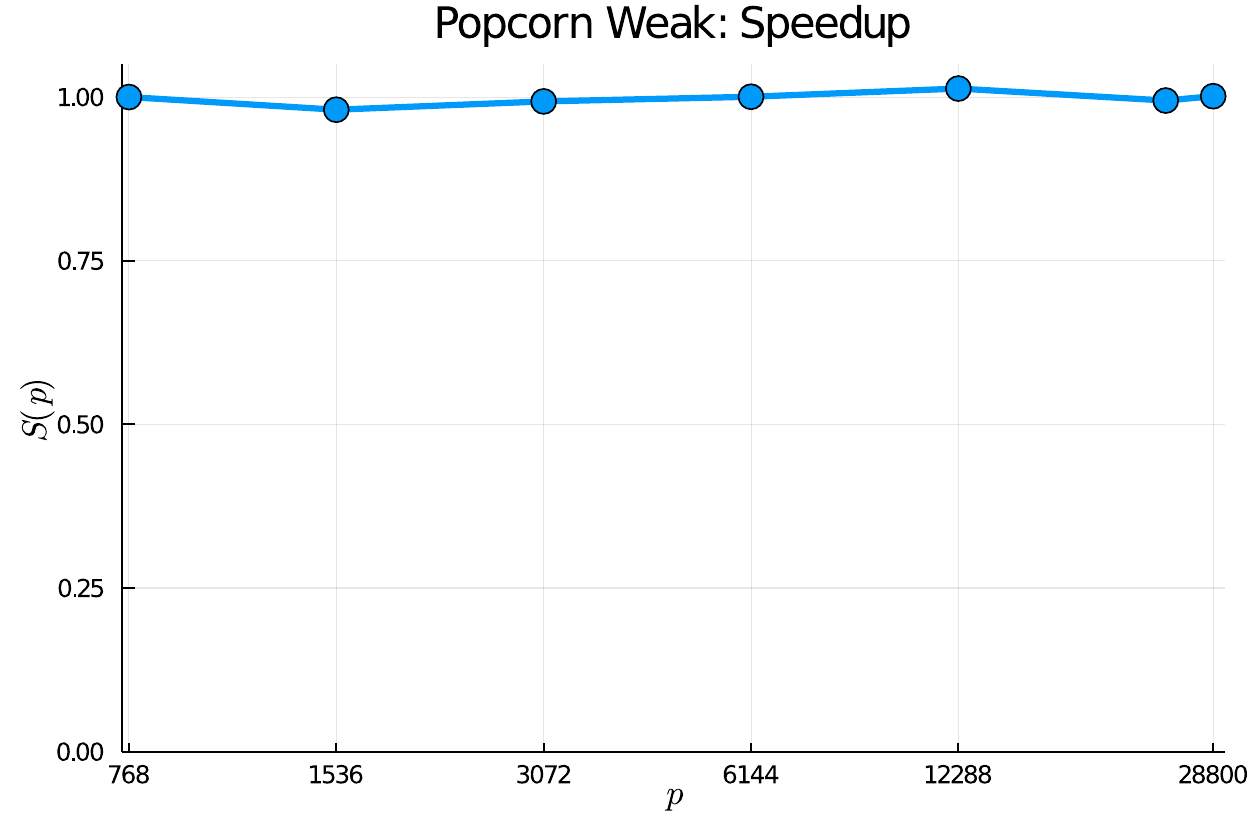}
  \end{subfigure}
  \hfill
  \begin{subfigure}[b]{0.45\textwidth}
    \centering
    \includegraphics[scale=0.4]{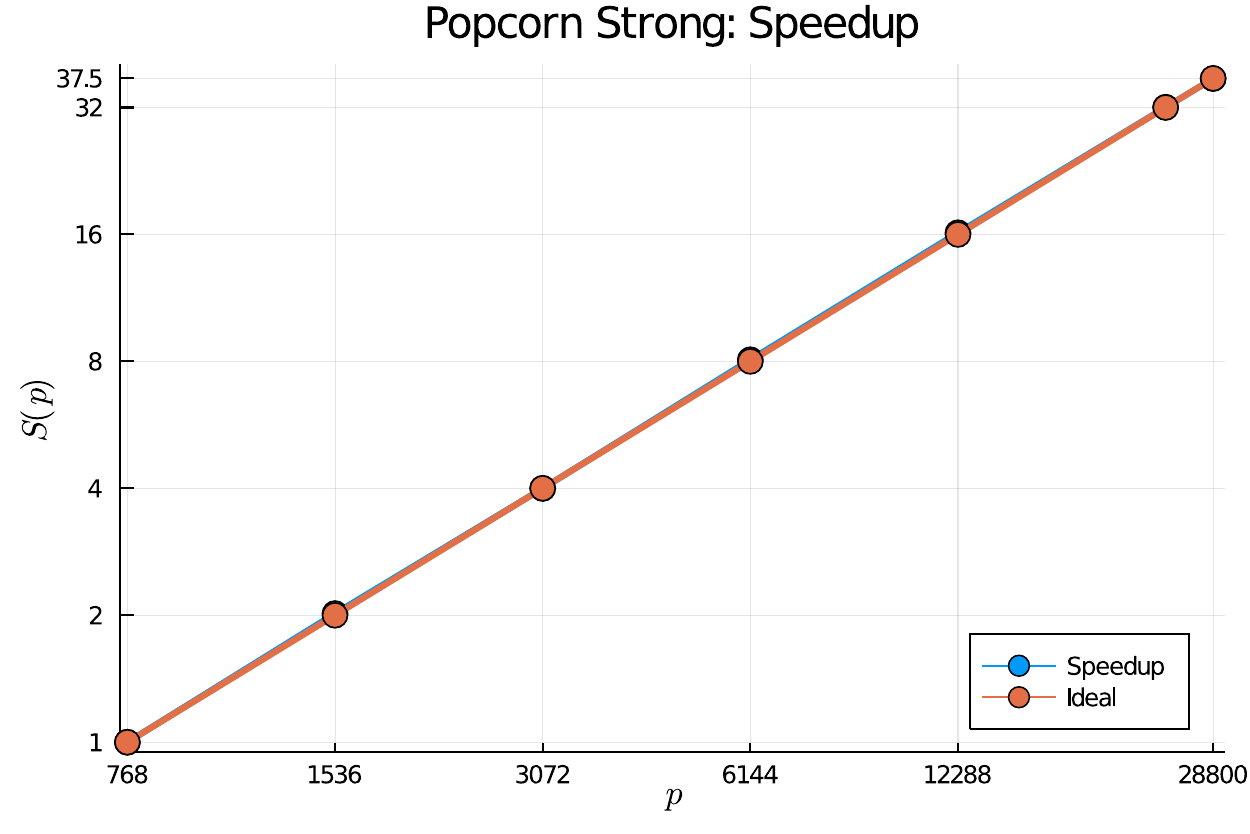}
  \end{subfigure}
  \caption{3D Poisson Popcorn speedup under weak scaling (left) and strong scaling (right).}
  \label{fig:PopcornScaling}
\end{figure}


\begin{figure}
  \centering
  \begin{subfigure}[b]{0.45\textwidth}
    \centering
    \includegraphics[scale=0.4]{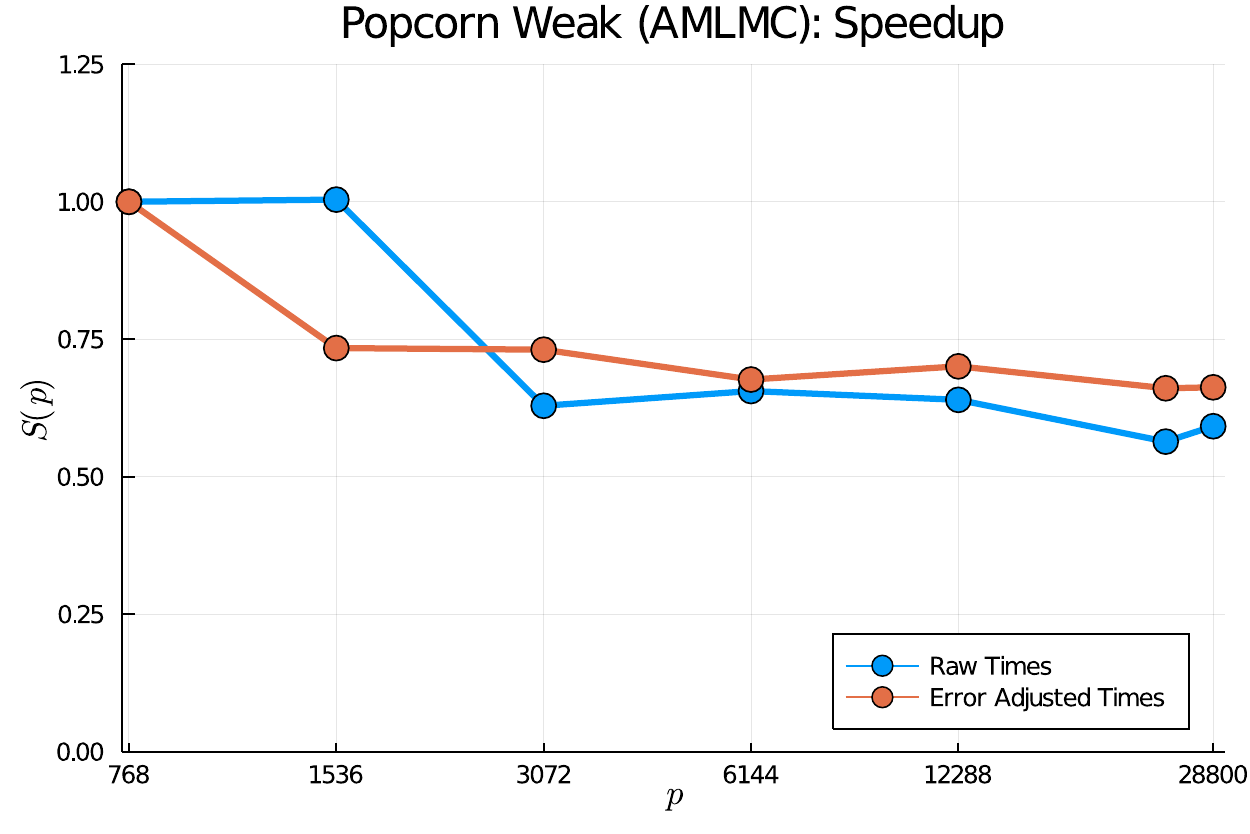}
  \end{subfigure}
  \hfill
  \begin{subfigure}[b]{0.45\textwidth}
    \centering
    \includegraphics[scale=0.4]{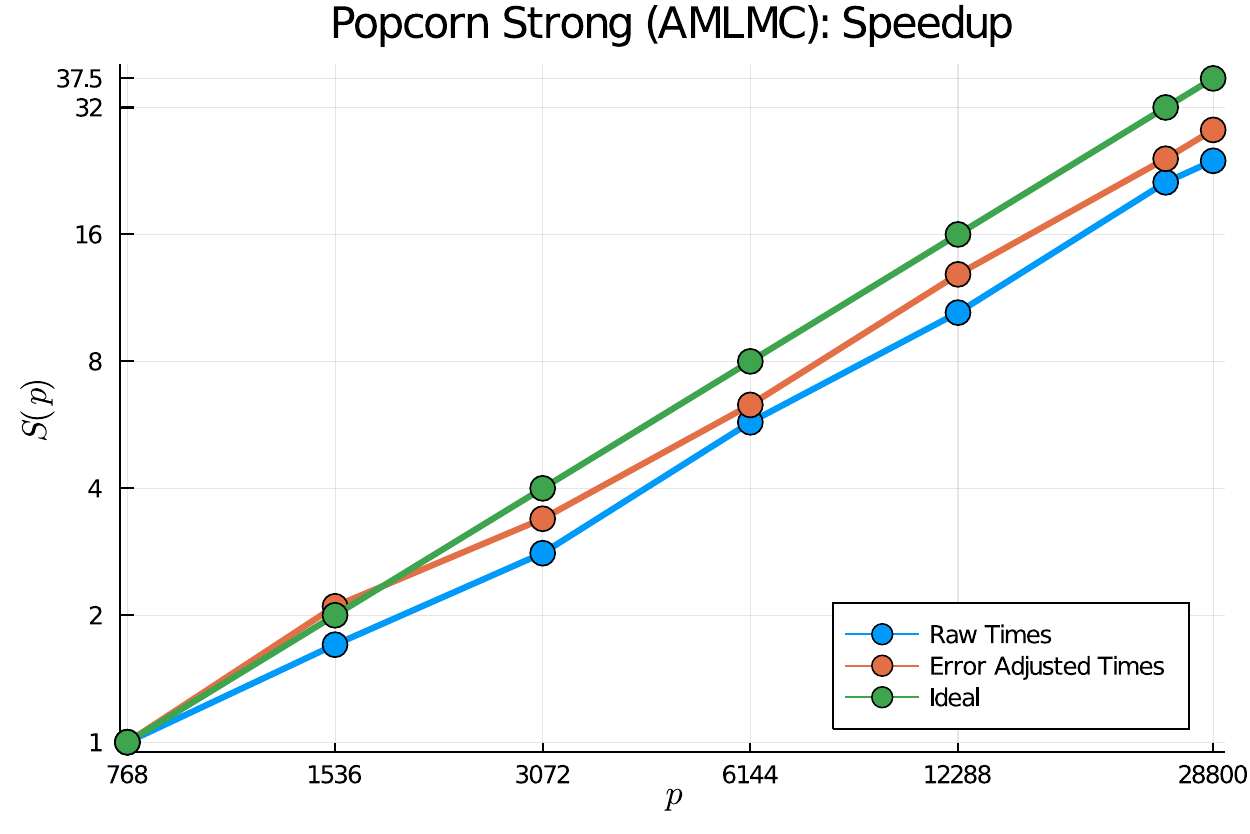}
  \end{subfigure}
  \caption{3D Poisson Popcorn \ac{amlmc} speedup under weak scaling (left) and strong scaling (right).}
  \label{fig:AMLMCScaling}
\end{figure}

\begin{figure}
  \centering
  \begin{subfigure}[b]{0.45\textwidth}
    \centering
    \includegraphics[scale=0.4]{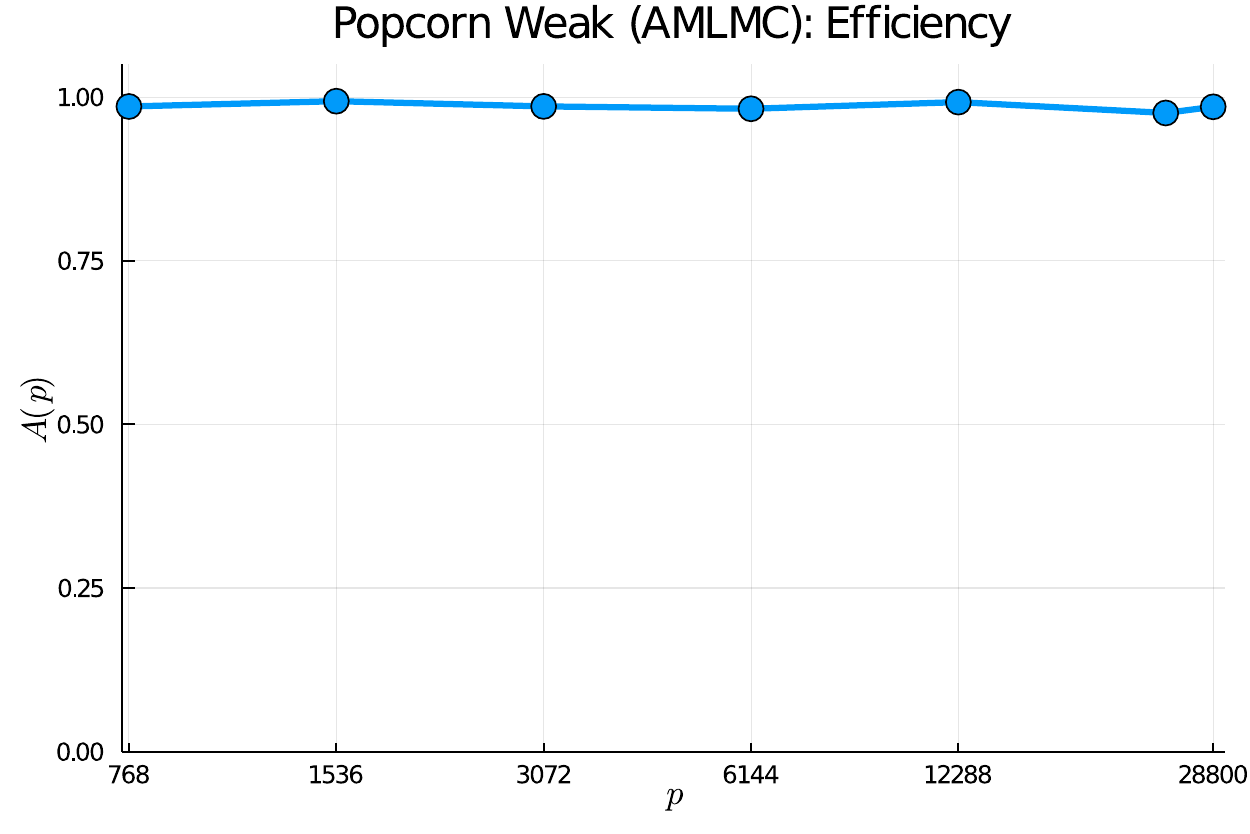}  
  \end{subfigure}
  \hfill
  \begin{subfigure}[b]{0.45\textwidth}
    \centering
    \includegraphics[scale=0.4]{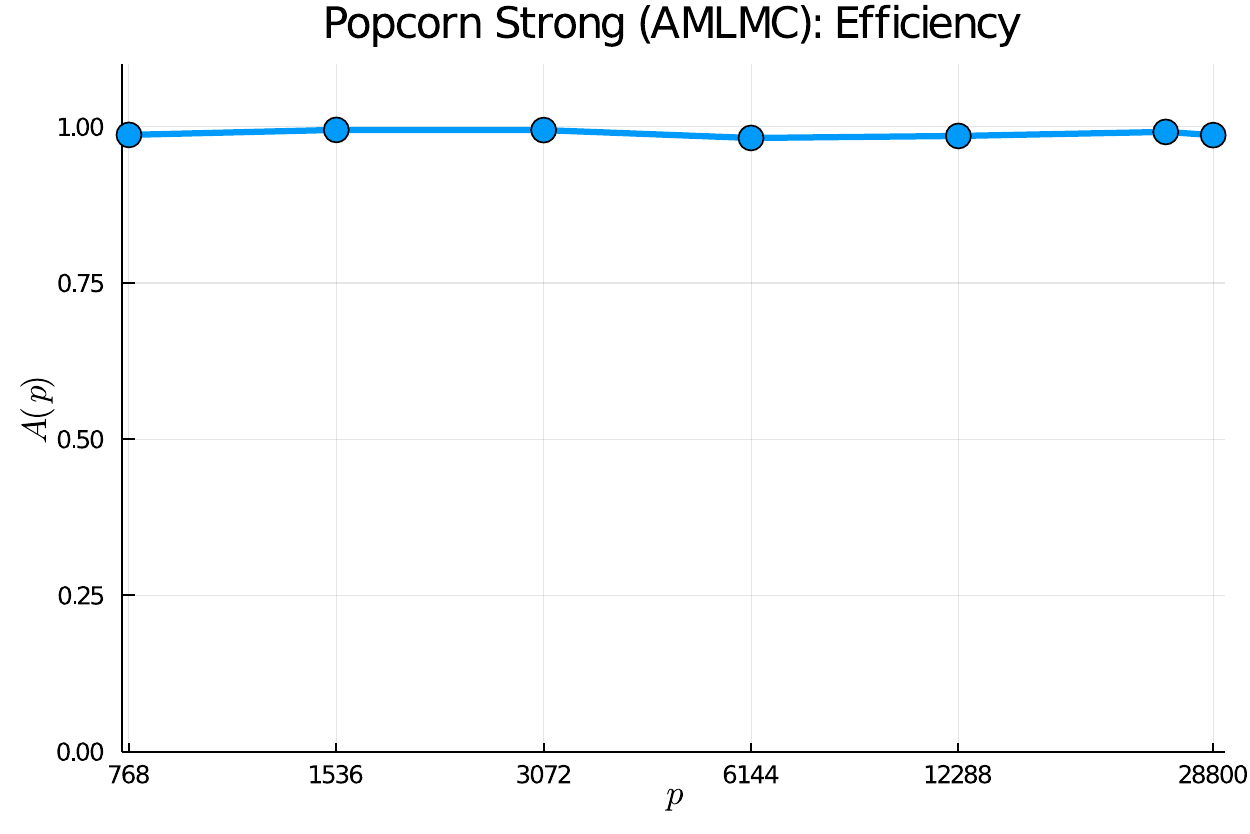}  
  \end{subfigure}
  \caption{3D Poisson Popcorn \ac{amlmc} efficiency under weak scaling (left) and strong scaling (right).}
  \label{fig:AMLMCNCR}
\end{figure}

\section{Conclusions}\label{sec:conclusions}

In this work we present a new dynamic scheduling algorithm to compute samples in the \ac{mlmc} method that fully exploits three levels of parallelism, within each sample, across samples, and across levels. The implementation of this algorithm based on the \ac{mpi} standard described here is based on general primitives available in any version of the standard and therefore in any \ac{mpi} distribution, i.e. it does not require any advanced feature like one-sided communication.

We also demonstrate the performance of method when applied to \ac{mlmc}, on a stress test of the scheduling algorithm on a benchmarking problem, as well as a $3$D Poisson model solved on a complex, sample dependent domain via \ac{amg}. The implementation is a powerful, robust, and scalable means to perform \ac{uq} via \ac{mlmc} or other similar statistical computations. The numerical examples presented herein show excelent scalability on the examples tested, when certain good scaling features are utilized. The algorithm implementation also performs well when utilized for \ac{amlmc}. Overall, our results show performance that competes with that demonstrated in the similar experiments in \cite{Gmeiner2016} and \cite{Tosi2021} (which only show results exploiting two layers of parallelism), showing high efficiency, and notable performance on the difficult case of rapidly executing models.

The results in this work could also be extended by exploiting the techniques in \cite{Tejedor2017} to account for the dependency between tasks in the context of parallel sampling. This is particularly relevant for the \ac{amlmc} method where some tasks have a dependency on other tasks that require more processors. Although this can be handled just by an iterative loop as descibed above, a dynamic scheduling with task dependency could potentially be more efficient. This development require to define appropriate priorities taking into account the amount of parallelism at sample level and is left for a future work.


\bibliographystyle{plain}
\bibliography{refs.bib}

\begin{thebibliography}{10}

\bibitem{Adcock2017}
Ben Adcock, Anders~C. Hansen, Clarice Poon, and Bogdan Roman.
\newblock {Breaking the coherence barrier: A new theory for compressed
  sensing}.
\newblock {\em Forum of Mathematics, Sigma}, 5, 2017.

\bibitem{Babuska2010}
Ivo Babu{\v{s}}ka, Fabio Nobile, and Ra{\'{u}}l Tempone.
\newblock {A stochastic collocation method for elliptic partial differential
  equations with random input data}.
\newblock {\em SIAM Review}, 52(2):317--355, 2010.

\bibitem{Badia2021}
Santiago Badia, Jerrad Hampton, and Javier Principe.
\newblock {Embedded multilevel Monte Carlo for uncertainty quantification in
  random domains}.
\newblock {\em International Journal for Uncertainty Quantification},
  11(1):119--142, 2021.

\bibitem{Badia2020}
Santiago Badia and Alberto~F. Mart{\'{i}}n.
\newblock {A tutorial-driven introduction to the parallel finite element
  library FEMPAR v1.0.0}.
\newblock {\em Computer Physics Communications}, 248:107059, mar 2020.

\bibitem{Badia.etal.ACME.2013}
Santiago Badia, Alberto~F. Mart{\'{i}}n, and Javier Principe.
\newblock {Implementation and Scalability Analysis of Balancing Domain
  Decomposition Methods}.
\newblock {\em Archives of Computational Methods in Engineering},
  20(3):239--262, 2013.

\bibitem{badia_multilevel_2016}
Santiago Badia, Alberto~F. Mart{\'{i}}n, and Javier Principe.
\newblock {Multilevel balancing domain decomposition at extreme scales}.
\newblock {\em SIAM Journal on Scientific Computing}, 38(1):C22--C52, 2016.

\bibitem{badia_fempar:_2017}
Santiago Badia, Alberto~F. Mart{\'{i}}n, and Javier Principe.
\newblock {FEMPAR: An Object-Oriented Parallel Finite Element Framework}.
\newblock {\em Archives of Computational Methods in Engineering},
  25(2):195--271, oct 2018.

\bibitem{Badia2017a}
Santiago Badia, Francesc Verdugo, and Alberto~F. Mart{\'{i}}n.
\newblock {The aggregated unfitted finite element method for elliptic
  problems}.
\newblock {\em Computer Methods in Applied Mechanics and Engineering},
  336:533--553, jul 2018.

\bibitem{petsc-web-page}
Satish Balay, Shrirang Abhyankar, Mark~F. Adams, Steven Benson, Jed Brown,
  Peter Brune, Kris Buschelman, Emil~M. Constantinescu, Lisandro Dalcin, Alp
  Dener, Victor Eijkhout, Jacob Faibussowitsch, William~D. Gropp, V\'{a}clav
  Hapla, Tobin Isaac, Pierre Jolivet, Dmitry Karpeev, Dinesh Kaushik,
  Matthew~G. Knepley, Fande Kong, Scott Kruger, Dave~A. May, Lois~Curfman
  McInnes, Richard~Tran Mills, Lawrence Mitchell, Todd Munson, Jose~E. Roman,
  Karl Rupp, Patrick Sanan, Jason Sarich, Barry~F. Smith, Stefano Zampini, Hong
  Zhang, Hong Zhang, and Junchao Zhang.
\newblock {PETS}c {W}eb page, 2023.

\bibitem{Barth2011}
Andrea Barth, Christoph Schwab, and Nathaniel Zollinger.
\newblock {Multi-level Monte Carlo Finite Element method for elliptic PDEs with
  stochastic coefficients}.
\newblock {\em Numerische Mathematik}, 119(1):123--161, sep 2011.

\bibitem{Baumgarten2021}
Niklas Baumgarten and Christian Wieners.
\newblock {The parallel finite element system M++ with integrated multilevel
  preconditioning and multilevel Monte Carlo methods}.
\newblock {\em Computers and Mathematics with Applications}, 81:391--406, 2021.

\bibitem{Blazewicz2019}
Jacek Blazewicz, Klaus~H. Ecker, Erwin Pesch, G{\"{u}}nter Schmidt, Malgorzata
  Sterna, and Jan Weglarz.
\newblock {\em {Handbook on Scheduling}}.
\newblock Springer International Publishing, Cham, 2019.

\bibitem{Chaudhry20181127}
Jehanzeb~H. Chaudhry, Nathanial Burch, and Donald Estep.
\newblock {Efficient distribution estimation and uncertainty quantification for
  elliptic problems on domains with stochastic boundaries}.
\newblock {\em SIAM-ASA Journal on Uncertainty Quantification},
  6(3):1127--1150, jan 2018.

\bibitem{Chen2017}
Peng Chen.
\newblock {Sparse quadrature for high-dimensional integration with Gaussian
  measure}.
\newblock {\em ESAIM: Mathematical Modelling and Numerical Analysis},
  52(2):631--657, mar 2018.

\bibitem{Cliffe2011}
K.~A. Cliffe, M.~B. Giles, R.~Scheichl, and A.~L. Teckentrup.
\newblock {Multilevel Monte Carlo methods and applications to elliptic PDEs
  with random coefficients}.
\newblock {\em Computing and Visualization in Science}, 14(1):3--15, 2011.

\bibitem{Collier2015}
Nathan Collier, Abdul~Lateef Haji-Ali, Fabio Nobile, Erik von Schwerin, and
  Ra{\'{u}}l Tempone.
\newblock {A continuation multilevel Monte Carlo algorithm}.
\newblock {\em BIT Numerical Mathematics}, 55(2):399--432, 2015.

\bibitem{Dambrine2016921}
M.~Dambrine, I.~Greff, H.~Harbrecht, and B.~Puig.
\newblock {Numerical solution of the Poisson equation on domains with a thin
  layer of random thickness}.
\newblock {\em SIAM Journal on Numerical Analysis}, 54(2):921--941, jan 2016.

\bibitem{Dambrine2017943}
M.~Dambrine, I.~Greff, H.~Harbrecht, and B.~Puig.
\newblock {Numerical solution of the homogeneous Neumann boundary value problem
  on domains with a thin layer of random thickness}.
\newblock {\em Journal of Computational Physics}, 330:943--959, feb 2017.

\bibitem{de2020touu}
Subhayan De, Jerrad Hampton, Kurt Maute, and Alireza Doostan.
\newblock {Topology optimization under uncertainty using a stochastic
  gradient-based approach}.
\newblock {\em Structural and Multidisciplinary Optimization},
  62(5):2255--2278, 2020.

\bibitem{Diaz2018b}
Paul Diaz, Alireza Doostan, and Jerrad Hampton.
\newblock {Sparse polynomial chaos expansions via compressed sensing and
  D-optimal design}.
\newblock {\em Computer Methods in Applied Mechanics and Engineering},
  336:640--666, 2018.

\bibitem{Gmeiner2016}
D.~Drzisga, B.~Gmeiner, U.~R{\"{u}}de, R.~Scheichl, and B.~Wohlmuth.
\newblock {Scheduling Massively Parallel Multigrid for Multilevel Monte Carlo
  Methods}.
\newblock {\em SIAM Journal on Scientific Computing}, 39(5):S873--S897, jan
  2017.

\bibitem{Eldred1998}
M.~Eldred and W.~Hart.
\newblock {Design and implementation of multilevel parallel optimization on the
  Intel teraflops}.
\newblock In {\em 7th AIAA/USAF/NASA/ISSMO Symposium on Multidisciplinary
  Analysis and Optimization}, pages 44--54, St. Louis, MO, sep 1998. American
  Institute of Aeronautics and Astronautics.

\bibitem{Eldred2000}
M.~Eldred, W.~Hart, B.~Schimel, and B.~Waanders.
\newblock {Multilevel parallelism for optimization on MP computers - Theory and
  experiment}.
\newblock In {\em 8th Symposium on Multidisciplinary Analysis and
  Optimization}, Long Beach, CA, sep 2000. American Institute of Aeronautics
  and Astronautics.

\bibitem{Elfverson}
Daniel Elfverson, Fredrik Hellman, and Axel Malqvist.
\newblock {A multilevel Monte Carlo method for computing failure
  probabilities}.
\newblock {\em SIAM-ASA Journal on Uncertainty Quantification}, 4(1):312--330,
  2016.

\bibitem{hypre2006}
Robert~D. Falgout, Jim~E. Jones, and Ulrike~Meier Yang.
\newblock The design and implementation of hypre, a library of parallel high
  performance preconditioners.
\newblock In Are~Magnus Bruaset and Aslak Tveito, editors, {\em Numerical
  Solution of Partial Differential Equations on Parallel Computers}, pages
  267--294, Berlin, Heidelberg, 2006. Springer Berlin Heidelberg.

\bibitem{Gantner2016}
Robert~N. Gantner.
\newblock {A generic C++ library for multilevel quasi-Monte Carlo}.
\newblock In {\em PASC 2016 - Proceedings of the Platform for Advanced
  Scientific Computing Conference}, pages 1--12, 2016.

\bibitem{ghanem1991stochastic}
Roger~G. Ghanem and Pol~D. Spanos.
\newblock {\em {Stochastic Finite Elements: A Spectral Approach}}.
\newblock Springer-Verlag New York, 1991.

\bibitem{Giles2008}
Michael~B. Giles.
\newblock {Multilevel Monte Carlo Path Simulation}.
\newblock {\em Operations Research}, 56(3):607--617, jun 2008.

\bibitem{Giles2012}
Michael~B. Giles and Christoph Reisinger.
\newblock {Stochastic finite differences and multilevel Monte Carlo for a class
  of SPDEs in finance}.
\newblock {\em SIAM Journal on Financial Mathematics}, 3(1):572--592, jan 2012.

\bibitem{Giles2009}
Michael~B Giles and Ben~J Waterhouse.
\newblock {Multilevel quasi-Monte Carlo path simulation}.
\newblock In {\em Advanced Financial Modelling, Radon Series Comp. Appl. Math},
  volume~8, pages 165--181. de Gruyter, Berlin, 2009.

\bibitem{Graham1966}
R.~L. Graham.
\newblock {Bounds for Certain Multiprocessing Anomalies}.
\newblock {\em Bell System Technical Journal}, 45(9):1563--1581, 1966.

\bibitem{Graham1969}
R.~L. Graham.
\newblock {Bounds on Multiprocessing Timing Anomalies}.
\newblock {\em SIAM Journal on Applied Mathematics}, 17(2):416--429, mar 1969.

\bibitem{Hampton2015}
Jerrad Hampton and Alireza Doostan.
\newblock {Compressive sampling of polynomial chaos expansions: Convergence
  analysis and sampling strategies}.
\newblock {\em Journal of Computational Physics}, 280:363--386, 2015.

\bibitem{Harbrecht2016}
H.~Harbrecht, M.~Peters, and M.~Siebenmorgen.
\newblock {Analysis of the domain mapping method for elliptic diffusion
  problems on random domains}.
\newblock {\em Numerische Mathematik}, 134(4):823--856, 2016.

\bibitem{Harbrecht2013}
Helmut Harbrecht and Jingzhi Li.
\newblock {First order second moment analysis for stochastic interface problems
  based on low-rank approximation}.
\newblock {\em Mathematical Modelling and Numerical Analysis},
  47(5):1533--1552, sep 2013.

\bibitem{Harbrecht2008}
Helmut Harbrecht, Reinhold Schneider, and Christoph Schwab.
\newblock {Sparse second moment analysis for elliptic problems in stochastic
  domains}.
\newblock {\em Numerische Mathematik}, 109(3):385--414, 2008.

\bibitem{Henson2002}
Van~Emden Henson and Ulrike~Meier Yang.
\newblock Boomeramg: A parallel algebraic multigrid solver and preconditioner.
\newblock {\em Applied Numerical Mathematics}, 41(1):155--177, 2002.
\newblock Developments and Trends in Iterative Methods for Large Systems of
  Equations - in memorium Rudiger Weiss.

\bibitem{Kebaier2005}
Ahmed Kebaier.
\newblock {Statistical Romberg extrapolation: A new variance reduction method
  and applications to option pricing}.
\newblock {\em Annals of Applied Probability}, 15(4):2681--2705, nov 2005.

\bibitem{lemaitre2010spectral}
Olivier Le~Maitre and Omar~M. Knio.
\newblock {\em {Spectral Methods for Uncertainty Quantification: With
  Applications to Computational Fluid Dynamics}}.
\newblock Springer, 2010.

\bibitem{Luthen2021}
Nora L{\"{u}}then, Stefano Marelli, and Bruno Sudret.
\newblock {Sparse polynomial chaos expansions: Literature survey and
  benchmark}.
\newblock {\em SIAM-ASA Journal on Uncertainty Quantification}, 9(2):593--649,
  2021.

\bibitem{Mishra2012a}
S.~Mishra, Ch~Schwab, and J.~{\v{S}}ukys.
\newblock {Multi-level Monte Carlo finite volume methods for nonlinear systems
  of conservation laws in multi-dimensions}.
\newblock {\em Journal of Computational Physics}, 231(8):3365--3388, apr 2012.

\bibitem{Mishra2012}
S.~Mishra, Ch~Schwab, and J.~{\v{S}}ukys.
\newblock {Multilevel Monte Carlo finite volume methods for shallow water
  equations with uncertain topography in multi-dimensions}.
\newblock {\em SIAM Journal on Scientific Computing}, 34(6):B761--B784, jan
  2012.

\bibitem{Mohan2011874}
P.~Surya Mohan, Prasanth~B. Nair, and Andy~J. Keane.
\newblock {Stochastic projection schemes for deterministic linear elliptic
  partial differential equations on random domains}.
\newblock {\em International Journal for Numerical Methods in Engineering},
  85(7):874--895, feb 2011.

\bibitem{Peherstorfer2018}
Benjamin Peherstorfer, Karen Willcox, and Max Gunzburger.
\newblock {Survey of multifidelity methods in uncertainty propagation,
  inference, and optimization}.
\newblock {\em SIAM Review}, 60(3):550--591, 2018.

\bibitem{Pisaroni2017}
M.~Pisaroni, F.~Nobile, and P.~Leyland.
\newblock {A Continuation Multi Level Monte Carlo (C-MLMC) method for
  uncertainty quantification in compressible inviscid aerodynamics}.
\newblock {\em Computer Methods in Applied Mechanics and Engineering},
  326:20--50, 2017.

\bibitem{Rauhut2012}
Holger Rauhut and Rachel Ward.
\newblock {Sparse Legendre expansions via $\ell_1$-minimization}.
\newblock {\em Journal of Approximation Theory}, 164(5):517--533, 2012.

\bibitem{Shegunov2020}
Nikolay Shegunov and Oleg Iliev.
\newblock {On dynamic parallelization of multilevel Monte Carlo algorithm}.
\newblock {\em Cybernetics and Information Technologies}, 20(6):116--125, 2020.

\bibitem{Sukys2014}
Jonas {\v{S}}ukys.
\newblock {Adaptive load balancing for massively parallel multi-level Monte
  Carlo solvers}.
\newblock In {\em Lecture Notes in Computer Science (including subseries
  Lecture Notes in Artificial Intelligence and Lecture Notes in
  Bioinformatics)}, volume 8384 LNCS, pages 47--56. Springer, Berlin,
  Heidelberg, 2014.

\bibitem{Sukys2014a}
Jonas {\v{S}}ukys.
\newblock {\em {Robust multi-level Monte Carlo finite volume methods for
  systems of hyperbolic conservation laws with random input data}}.
\newblock PhD thesis, 2014.

\bibitem{Sukys2012}
Jonas {\v{S}}ukys, Siddhartha Mishra, and Christoph Schwab.
\newblock {Static load balancing for Multi-Level Monte Carlo finite volume
  solvers}.
\newblock In {\em Lecture Notes in Computer Science (including subseries
  Lecture Notes in Artificial Intelligence and Lecture Notes in
  Bioinformatics)}, volume 7203 LNCS, pages 245--254. Springer, Berlin,
  Heidelberg, 2012.

\bibitem{Tejedor2017}
Enric Tejedor, Yolanda Becerra, Guillem Alomar, Anna Queralt, Rosa~M. Badia,
  Jordi Torres, Toni Cortes, and Jes{\'{u}}s Labarta.
\newblock {PyCOMPSs: Parallel computational workflows in Python}.
\newblock {\em The International Journal of High Performance Computing
  Applications}, 31(1):66--82, jan 2017.

\bibitem{Tosi2021}
Riccardo Tosi, Ramon Amela, Rosa Badia, and Riccardo Rossi.
\newblock {A parallel dynamic asynchronous framework for Uncertainty
  Quantification by hierarchical Monte Carlo algorithms}.
\newblock {\em Journal of Scientific Computing}, 89(28), 2021.

\bibitem{Vazirani2003}
Vijay~V. Vazirani.
\newblock {\em {Approximation Algorithms}}.
\newblock Springer Berlin Heidelberg, Berlin, Heidelberg, 2003.

\bibitem{Verdugo2019}
Francesc Verdugo, Alberto~F. Mart{\'{i}}n, and Santiago Badia.
\newblock {Distributed-memory parallelization of the aggregated unfitted finite
  element method}.
\newblock {\em Computer Methods in Applied Mechanics and Engineering},
  357:112583, dec 2019.

\bibitem{Xiu2005}
Dongbin Xiu and Jan~S Hesthaven.
\newblock {High-order collocation methods for differential equations with
  random inputs}.
\newblock {\em SIAM Journal on Scientific Computing}, 27(3):1118--1139, 2005.

\bibitem{Xiu2003}
Dongbin Xiu and George~Em Karniadakis.
\newblock {Modeling uncertainty in flow simulations via generalized polynomial
  chaos}.
\newblock {\em Journal of Computational Physics}, 187(1):137--167, 2003.

\bibitem{Xiu2006}
Dongbin Xiu and Daniel~M. Tartakovsky.
\newblock {Numerical methods for differential equations in random domains}.
\newblock {\em SIAM Journal on Scientific Computing}, 28(3):1167--1185, jan
  2006.

\bibitem{Zakharov2020}
Petr Zakharov, Oleg Iliev, Jan Mohring, and Nikolay Shegunov.
\newblock {Parallel Multilevel Monte Carlo Algorithms for Elliptic PDEs with
  Random Coefficients}.
\newblock In {\em Lecture Notes in Computer Science (including subseries
  Lecture Notes in Artificial Intelligence and Lecture Notes in
  Bioinformatics)}, volume 11958 LNCS, pages 463--472, 2020.

\end{thebibliography}
\end{document}